\newcommand{\rats}{\mathbb{Q}}
\newcommand{\reals}{\mathbb{R}}
\newcommand{\preals}{\ensuremath{\reals_+}}
\DeclareMathOperator{\scor}{\mathbf{\Sigma}}
\DeclareMathOperator{\unam}{\mathbf{S}}
\DeclareMathOperator{\sunam}{\mathbf{S}}
\DeclareMathOperator{\wunam}{\mathbf{W}}
\DeclareMathOperator{\maj}{\mathbf{M}}  
\DeclareMathOperator{\cond}{\mathbf{C}}
\DeclareMathOperator{\SP}{\mathbf{SP}}
\newcommand{\cons}{\mathcal{K}}
\newcommand{\profs}{\mathcal{P}}
\newcommand{\votesits}{\mathcal{V}}
\newcommand{\votedists}{\mathcal{P}}
\newcommand{\quot}{\mathcal{Q}}
\newcommand{\elecs}{\mathcal{E}}
\DeclareMathOperator{\simp}{\Delta}
\DeclareMathOperator{\R}{\mathcal{R}}
\def\noproof{\hfill\ensuremath{\square}}
\DeclareMathOperator{\nummap}{\mathcal{N}}
\DeclareMathOperator{\distmap}{\mathcal{D}}
\newcommand{\dt}{\tilde{d}}
\newcommand{\hl}[1]{\textnormal{\textbf{#1}}} 
\newcommand{\comment}[1]{\textcolor{red}{\textit{#1}}}
\theoremstyle{plain}
\newtheorem{defn}{Definition}[section]
\newtheorem{remark}[defn]{Remark}
\newtheorem{prop}[defn]{Proposition}
\newtheorem{cor}[defn]{Corollary}
\newtheorem{eg}[defn]{Example}
\begin{document}

\title{Distance rationalization of social rules}
\author{Benjamin Hadjibeyli}
\address{ENS Lyon}
\email{benjamin.hadjibeyli@ens-lyon.fr}
\author{Mark C. Wilson}
\address{University of Auckland}
\email{mcw@cs.auckland.ac.nz}

\begin{abstract}
The concept of distance rationalizability of social choice rules has been
explored in recent years by several authors. We deal here with several foundational questions, and unify, correct, and generalize previous work. For example, we study a new question involving 
uniqueness of representation in the distance rationalizability framework, and present a counterexample.

 For rules satisfying various axiomatic properties such as anonymity, neutrality and homogeneity, the standard profile representation of input can be 
compressed substantially. We explain in detail using quotient constructions and symmetry groups how distance rationalizability 
is interpreted in this situation. This enables us to connect the theory of distance rationalizability with geometric concepts such as Earth Mover distance and optimal transportation. We expect this connection to prove fruitful in future work.

We improve on the best-known sufficient conditions for  rules rationalized via votewise distances
to satisfy 
anonymity, neutrality, homogeneity, consistency and continuity. This leads to a class of well-behaved rules which deserve closer scrutiny in future.
\end{abstract}

\subjclass{}
\keywords{social choice theory, collective decision-making, rankings}

\thanks{We thank John Hillas for raising the uniqueness question in Section~\ref{ss:unique}. The second author thanks Elchanan Mossel and Miklos Racz for useful
conversations.}
\maketitle

\section{Introduction} \label{s:intro}

The number of possible social choice rules is huge, and even the number
of those singled out in the literature for analysis is large.
Researchers have tried many different axioms in order to classify and
characterize these rules, sometimes leading to impossibility theorems.
New rules are still being introduced, and the subject is far from tidy.

A promising unifying framework is that of \emph{distance
rationalization} (abbreviated DR), whereby some subset $D$ (the
\emph{consensus set}) of the set of elections is distinguished. That
subset is further partitioned into  finitely many subsets, on each of which there is a
different social outcome. We choose a distance measure $d$ on elections,
and for each election $E$ outside $D$, an outcome is chosen socially if
and only if it is chosen in some election in $D$ minimizing the distance
under $d$ to $E$. 
This approach
``decomposes" a rule into simpler components $D$ and $d$.

Although the basic idea is quite old (arguably going back to Condorcet's maximum likelihood approach to voting), systematic study of this approach began with
Nitzan, Lerer and Campbell \cite{Nitz1981, LeNi1985, CaNi1986}.  Their work shows that almost 
every known rule can be represented
in the DR framework, and the main interest in the subject is when we can
choose the distance and consensus notions to be natural and computationally tractable. More recently,  Elkind, Faliszewski, and Slinko \cite{EFS2012, EFS2015, ES2015}
 have further developed the theory, following  Meskanen and Nurmi \cite{MeNu2008}.
In particular, they have focused on the important class of \emph{votewise} distances and 
obtained useful sufficient conditions on $(D,d)$ so that the induced rule
satisfies desirable properties such as monotonicity, anonymity, and
homogeneity. 

\subsection{Our contribution} \label{ss:contrib} 
We deal with several foundational and definitional issues,
many of which have not been discussed by previous authors (in some cases
because the level of generality they used was not sufficient to
distinguish these concepts). Some of our
contribution consists of a more efficient and rigorous presentation of
known material. Section~\ref{s:defs} develops the basic notation and terminology. Our
approach is similar to that taken by previous researchers, but there are
some improvements. We aim to operate generally (for example, by using a hemimetric 
rather than a metric, and considering social choice and social welfare functions in a single analysis), and explicitly distinguish several concepts that have
sometimes been conflated in previous work. In Section~\ref{s:foundation} 
we give necessary and sufficient conditions for a rule to be distance rationalizable, improving slightly on results of the abovementioned authors. We pose an interesting question regarding uniqueness of representation
in the DR framework, which does not appear to have been noticed before. We give 
a counterexample in Section~\ref{ss:unique}.

In Section~\ref{s:quot} we explain how 
equivalence relations and symmetries between elections allow us to describe DR 
rules more compactly, and make the connection between the original profile-based definitions and the quotient representations explicit. The distinction between compatible and totally compatible distances is important and new, and the idea of a distance being \emph{simple} with respect to an equivalence relation is also new as far as we know.  None of our results in this section rely on the distance being votewise and are proved for general consensuses; the applications therefore generalize results of Elkind, Faliszewski and Slinko \cite{EFS2015}. In particular, we make the connection between $\ell^1$-votewise distances and 
the Earth Mover distance, relating the subject of distance rationalization of anonymous rules to the theory of optimal transportation and maximum weight matchings. We believe that this new connection will prove 
fruitful in future work.

We apply the above results to neutrality and anonymity, obtaining complete characterizations in Propositions~\ref{prop:neut} and~\ref{prop:anon}. In Section~\ref{s:homog} we deal with homogeneity, which is not quite covered by the results on groups. Our approach shows that the reason Dodgson's rule is not homogeneous is because the equivalence relation is induced by the action of a monoid (``group without inverses") that is not a group. 

Specializing to votewise distances, we concentrate in Section~\ref{s:VMP} on what we term the Votewise Minimizer Property, which is a way of requiring the consensus and distance to combine well. This allows us to give  improved sufficient conditions for DR rules to satisfy homogeneity, consistency, and continuity.

\section{Basic definitions} \label{s:defs}

We use standard concepts of social choice theory. Not all of these concepts have completely 
standardized names. We shall need to deal with several candidate and voter sets simultaneously, which explains the generality of our definitions.
However in many cases  it suffices to deal with a fixed finite voter and candidate set.

\begin{defn}
\label{def:rankings}
We fix an infinite set $C^* = \{c_1, c_2, \dots, \dots \}$ of potential \hl{candidates} and an infinite set $V^*= \{v_1, v_2, \dots, \}$ of potential  
\hl{voters}. Let $C\subseteq C^*$. For each $s\geq 1$, an
\hl{$s$-ranking} is a strict linear order of $s$ elements chosen from
$C$.  The set of all $s$-rankings is denoted $L_s(C)$. When $C$ is finite and $s = |C|$, we
write simply $L(C)$. When $s=1$, we identify $L_1(C)$ with $C$ in the natural way. 
\end{defn}

\begin{remark}
When $C$ is finite, of size $m$ say, the set $L_s(C)$ consists of strict linear orderings of $C$ and has size $m(m-1)\cdots (m-s+1)$. By fixing a default linear 
ordering on $C$, we can interpret elements of $L_s(C)$ as partial permutations of $C$ in the usual way. 
\end{remark}

\begin{defn}
\label{def:profiles}
A \hl{profile} is a function $\pi: V \to L(C)$ where $V\subset V^*$ and $C\subset C^*$ are finite. 
We denote the set of all profiles by $\profs$. An \hl{election} is a triple $(C,V, \pi)$
with $\pi\in \profs$ and $\pi:V \to L(C)$. We denote the set of all elections with fixed $C$ and $V$  by
$\elecs(C, V)$, and the set of all elections by $\elecs$. 
\end{defn}

\begin{remark}
By definition $\pi(v)\in L(C)$ for each $v\in V$. 
If $C$ is linearly ordered as described above, then $\pi(v)^{-1}$ denotes the inverse permutation, and for each 
$c\in C$, $r(\pi(v),c):=\pi(v)^{-1}(c)$ gives the rank of $c$ in $v$'s preference order.

Of course, $C$ and $V$ are implicit in the definition of $\pi$, so strictly speaking an election is completely determined by a profile. 
We distinguish the two concepts because we sometimes want to deal with several different voter or candidate sets at the same time, and because 
$C$ is not really completely determined --- any superset of $C$ would also work.
\end{remark}

\begin{defn}
\label{def:rules}
A \hl{social rule of size $s$} is a function $R$ that takes each election $E = (C,V, \pi)$ to
a nonempty subset of $L_s(C)$. When there is a unique $s$-ranking chosen, the
word ``rule" becomes ``function". When $s=1$, we have the usual
\hl{social choice function}, and when $s=m$ the usual \hl{social welfare
function}.

For each subset $D$ of $\elecs$ we can consider a \hl{partial social rule with domain $D$} 
to be defined as above, but with domain restricted to $D$. We denote the domain of a partial social rule $R$ by $D(R)$. If $R$ and $R'$ are partial social rules such that $D(R) \subseteq D(R')$ and $R(E) = R'(E)$ for all $E \in D(R)$ then we say that $R'$ \hl{extends} $R$.
\end{defn}

\begin{remark}
Most previous work has dealt only with the cases $s=1$ and $s=m$.
\end{remark}

\subsection{Consensus}
\label{ss:consensus}

Intuitively, a consensus is simply a socially agreed unique outcome on some set of
elections. We now define it formally.
\begin{defn}
\label{def:cons}
An $s$-\hl{consensus} is a  partial social function $\cons$ of
size $s$.  The domain $D(\cons)$ of 
$\cons$ is called an \hl{$s$-consensus set}
and is partitioned into the inverse images $\cons_r:=
\cons^{-1}(\{r\})$.
\end{defn}

\begin{remark}
Note that we allow $\cons_r$ to be empty. This happens rarely for natural rules in the distance 
rationalizability framework, because it implies that there is no election for which $r$ is the unique social choice. 
However it is technically useful and allows us to deal with varying sets of candidates.
\end{remark}

It often makes sense to ensure coherence between the various values of $s$ for which we formalize a given consensus notion.

\begin{defn}
\label{def:restrict}
Let $\cons$ be a $1$-consensus. For each $s$ we define an $s$-consensus $\cons_{(s)}$ (the \hl{$s$-restriction} of $\cons$) as follows.  For each candidate $c$, $\cons_c$ is defined. Given $E=(C,V, \pi)\in \elecs$, define $E_{-c}$ to be the election $(C\setminus\{c\}, V, \pi')$, where $\pi'$ is obtained from $\pi$ by erasing $c$ from each ranking. 

Let $D_2$ be the set of all elections $E$ such that both $E=(C, V, \pi)$ and $E_{-c}$ both belong to the domain of $\cons$, where $c = \cons(E)$. Letting $c' = 
\cons(E_{-c})$, define $\cons_{(2)}$  on $D_2$ by its output, the $2$-ranking $cc'$. Continue by induction, reducing the domain at each step if necessary, and output a single $s$-ranking.
\end{defn}

Several specific consensuses have been described in the literature. Here we unify the presentation 
of several of the most common ones. 

\begin{defn}(qualified majority consensus)
\label{def:sunam}

Let $1/2 \leq \alpha < 1$. The \hl{$(\alpha, s)$-majority consensus}
$\unam^{(\alpha, s)}$ is the $s$-consensus with domain consisting of
all elections with the following property: there is some fraction $p> \alpha$ of the voters, 
all of whom agree on the order of the top $s$
candidates. The consensus choice is this common $s$-ranking.

Special cases:
\begin{itemize}
\item When $\alpha =  1/2$, we obtain the usual \hl{majority $s$-consensus} $\maj^s$.
\item The limiting value as $\alpha \to 1$ gives the case of unanimity. We denote this by $\sunam^s$. When $s=|C|$, we simply write $\sunam$ (called the 
\hl{strong unanimity consensus}), 
whereas when $s = 1$, for consistency with previous authors we denote it $\wunam$, the \hl{weak unanimity consensus}. 
\end{itemize}
\end{defn}

\begin{remark}
In general, the $s$-restriction of $\unam^{\alpha, 1}$ is not $\unam^{\alpha, s}$: if a majority rank $a$ first, and a majority of those rank $b$ above all candidates other than $a$, it is not necessarily the case that a majority of votes have $ab$ at the top (the fraction is more than $2\alpha - 1$, however). However,  a majority of the original voters rank $b$ either first or second. The $s$-restriction is the consensus for which more than fraction $\alpha$ of voters agree on the top candidate, more than $\alpha$ agree on the top two, etc.

However, $\sunam^s$ is indeed the $s$-restriction of $\wunam$: if all voters rank $a$ first and all rank $b$ over all  candidates other than $a$, then all agree on the ranking $ab$, etc. 
\end{remark}

\begin{defn}(qualified Condorcet consensus)
\label{def:cond}

Let $1/2 \leq \alpha < 1$. The $\alpha$-\hl{Condorcet consensus} $\cond^{\alpha}$ has
domain consisting of all elections for which an $\alpha$-Condorcet
winner exists. That is, there is a (necessarily unique) candidate $c$ such that for any other candidate $c'$, a fraction strictly greater than $\alpha$ of voters rank $c$ over $c'$.

We define $\cond^{(\alpha, s)}$ to be the $s$-restriction of $\cond^{\alpha}$.

Special cases: 
\begin{itemize}  
\item When $\alpha = 1/2$ we denote this  by $\cond$, the usual Condorcet consensus. 
\item When $\alpha \to 1$, we obtain $\sunam$.
\end{itemize}
\end{defn}

\subsection{Distances}
\label{ss:metrics}

We require a notion of distance on elections. We aim to be as general as
possible. 

\begin{defn}(distance)
\label{def:dist}
A \hl{distance} (or \hl{hemimetric}) on $\elecs$ is a function
$d:\elecs \times \elecs \to \preals \cup \{\infty\}$ that satisfies the
identities 
\begin{itemize}
\item $d(x,x) = 0$,
\item $d(x,z) \leq d(x,y) + d(y,z)$.
\end{itemize}
 A \hl{pseudometric} is a distance that also satisfies 
 \begin{itemize}
\item $d(x,y) = d(y,x)$.
 \end{itemize}
 A \hl{quasimetric} is a distance that also satisfies 
  \begin{itemize}
\item $d(x,y) = 0 \Rightarrow x = y$.
 \end{itemize}
 A \hl{metric} is a distance that is both a quasimetric and a pseudometric.
We call a distance \hl{standard} if $d(E, E') = \infty$ whenever $E$ and $E'$ have
different sets of voters or candidates (this term has not been used in previous literature).
\end{defn}

\begin{eg}
\label{eg:insdel}
Let $d_{del}(E,E')$ (respectively $d_{ins}(E,E')$) be defined as the
minimum number of voters we must delete from (insert into)  election $E$
in order to reach election $E'$ (or $+\infty$ if $E'$ can never be
reached). Each of $d_{ins}$ and $d_{del}$ is a nonstandard  quasimetric.
\end{eg}

\begin{eg} (shortest path distances)
\label{eg:geodesic}
Consider a digraph $G$ with nodes indexed by elements of $\elecs$,
 and some edge relation between elections. Define $d$ to be the (unweighted) shortest path distance in $G$. This is a quasimetric. It is a metric if 
the underlying digraph is a graph. For example, $d_H, d_K, d_{ins}, d_{del}$
are defined via essentially this construction. Note that it suffices to specify for which 
$E, E'$ we have $d(E,E') = 1$ in order to specify such a distance, and not every quasimetric  is a shortest path distance, even after scaling by a constant, because if there are two points at distance $3$ there must also be points at distance $2$, for example. 
\end{eg}

\begin{eg}(some strange distances)
\label{eg:weird dist}
The following distances will be useful for existence results later.
Let $R$ be a rule. 

The first is a metric used by Campbell and Nitzan \cite{CaNi1986}. Define $d$ as follows. 
$$
d(E, E') = 
\begin{cases} 
0 & \text{if $E=E'$}\\
1 & \text{if $|R(E)| = 1$ and $R(E) \subset R(E')$}\\
1 & \text{if $|R(E')| = 1$ and $R(E') \subset R(E)$}\\
2 & \text{otherwise}.\\
\end{cases}
$$

We claim that $d$ is a metric. The only non-obvious axiom is the triangle inequality. It suffices to consider the case where $E, E', E''$ are distinct. Then $d(E,E') + d(E', E'') \geq 1+1=2\geq d(E, E'')$, yielding the result.

The second distance is a variant of the first, where instead we define $d(E, E') = 0$ if and only if $E=E'$ or $R(E) = R(E')$ and $|R(E)| = 1$. This is a pseudometric, since elections with the same unique winner are at distance zero. To prove the triangle inequality, first note that $R(E, E') \leq  1$ if and only if $R(E) \subseteq R(E')$ and $|R(E)| = 1$, or the analogous condition with $E$ and $E'$ exchanged holds. If $2\geq d(E,E'') > d(E,E') + d(E',E'')$, then at least one of the two terms on the right is $0$ and the other is at most $1$. Thus (without loss of generality) $E$ and $E'$ have a common unique winner under $R$ and $R(E') \subseteq R(E'')$, yielding the contradiction $d(E,E'') \leq 1$.

The third distance is the shortest path metric defined as follows: there is an edge joining $E$ and $E'$ 
if and only if $|R(E')| = 1$ and $R(E') \subset R(E)$, or the same with $E$ and $E'$ exchanged (these are the same as the cases defining $d(E,E') = 1$ in the definition of the Campbell-Nitzan distance).
\end{eg}

\subsubsection{Votewise distances}
\label{sss:votewise}

One commonly used class of distances consists of the \hl{votewise}
distances formalized in \cite{EFS2015}, which we now define after some preliminary work. They are each based on
distances on $L(C)$. See \cite{Diac1988} for basic
information about metrics on the symmetric group. 

\begin{eg}
\label{eg:dist Sn}
The most commonly used such distances on $L(C)$ are as follows.
\begin{itemize}
\item the \hl{discrete metric} $d_H$, defined by
$$d_H(\rho, \rho') =  
\begin{cases} 1 \quad \text{if $\rho = \rho'$} \\
0 \quad \text{otherwise}.
\end{cases}
$$
\item the \hl{inversion metric} $d_K$ (also called the swap, bubblesort or Kendall-$\tau$ metric), where $d_K(\rho, \sigma)$ is the 
minimal number of swaps of adjacent elements required to convert $\rho$ to $\sigma$.
\item \hl{Spearman's footrule} $d_S$, defined by 
 $$d_S(\rho, \rho'):= \sum_{c\in C} |r(\rho, c) -r(\rho',c)|.$$
 \end{itemize}
\end{eg} 

\begin{defn}
\label{def:norm}
A \hl{seminorm} on a real vector space $X$ is a real-valued function $N$ satisfying the identities
\begin{itemize}
\item  $N(x+y) \leq N(x)+N(y)$
\item  $N(\lambda x) = |\lambda| N(x)$
\end{itemize}
for all $x,y \in X$ and all $\lambda\in \reals$. Note that this implies that $N(0) = 0$ and 
$N(x) \geq 0$ for all $x\in X$.

A \hl{norm} is  a seminorm that also satisfies 
\begin{itemize}
\item $N(x) = 0 \Rightarrow x=0$.
\end{itemize}
\end{defn}

\begin{remark}
Every seminorm induces  a pseudometric via $d(x,y) = ||x-y||$. This is a metric if and only if the 
seminorm is a norm.
\end{remark}

\begin{eg}
\label{eg:norm}
Consider an $n$-dimensional space $X$ with fixed basis $e_1, \dots, e_n$ and corresponding
coefficients $x_i$ for each element $x\in X$. Fix $p$ with $1\leq p < \infty$ and define the 
$\ell^p$-norm on $X$ by 
$$
||x||_p = \left(\sum_{i=1}^n |x_i|^p\right)^{1/p}.
$$
When $p = \infty$ we define the $\ell^\infty$ norm by 
$$
||x||_\infty =  \max_{1\leq i \leq n} |x_i|.
$$
\end{eg}

\begin{defn} (votewise distances)
\label{def:votewise}

Choose a family $\{N_n\}_{n\geq 1}$ of
seminorms, where $N_n$ is defined on $\reals^n$. Fix candidate set $C$ and voter set $V$, and choose a distance $d$ on $L(C)$. Extend $d$ to a
function on $\profs(C,V)$ by taking $n = |V|$ and defining for $\sigma, \pi
\in \profs(C,V)$ 
$$d^{N_n}(\pi,\sigma):= N_n(d(\pi_1, \sigma_1), \dots, d(\pi_n, \sigma_n)). $$ 
This yields a distance on elections having the
same set of voters and candidates. We complete the definition of the extended distance
(which we denote by $d^N$) on $\elecs$ by declaring it to be standard.

We use the abbreviation $d^p$ for $d^{\ell^p}$, and sometimes we even
use just $d$ for $d^N$ if the meaning is clear.
\end{defn}

\begin{remark}
Note that if $d$ is a metric and $N$ is a norm, then $d^N$ is a metric. 
\end{remark}

\begin{eg} (famous votewise distances)
\label{example:dist}

The  distances $d^1_H$ and $d^1_K$ are 
called respectively the \hl{Hamming metric} and \hl{Kemeny metric}.
The Hamming metric measures the number of voters whose preferences 
must be changed in order to convert one profile to another, and as such has an interpretation in 
terms of bribery. The Kemeny metric measures how many swaps of adjacent candidates are 
required, and is related to models of voter error.
Among the many other votewise metrics, we single out $d^1_S$, sometimes called
the \hl{Litvak distance}. 
\end{eg}

\subsubsection{Tournament distances}
\label{sss:tour dist}

Some distances depend only on the net support for candidates.

\begin{eg} (tournament distances)
\label{example:dist 2}
Given an election $E= (C, V, \pi)$, we form the pairwise majority digraph $\Gamma(E)$ with
nodes indexed by the candidates, where the arc from $a$ to $b$ has
weight equal to the \emph{net support} for $a$ over $b$ in a pairwise contest.
Formally, there is an arc from $a$ to $b$ whose weight equals $n_{ab} - n_{ba}$, where 
$n_{ab}$ denotes the number of rankings in $\pi$ in which $a$ is above $b$.

Let $M(E)$ be the weighted adjacency matrix of $\Gamma(E)$ (with respect to an arbitrarily chosen
fixed ordering of $C$). Given a seminorm $N$ on the space of all $|C| \times |C|$ real matrices, 
we define the $N$-\hl{tournament distance} by 
$$
d^N(E, E') = N(M(E) - M(E')).
$$  

A closely related distance is defined in the
analogous way, but where each element of the adjacency matrix is
replaced by its sign ($1$, $0$, or $-1$).  We call this the
$N$-\hl{reduced tournament distance}. We denote the special cases
where $N$ is the $\ell^1$ norm on matrices by $d^T$ and $d^{RT}$ respectively. A (reduced)
tournament distance cannot be a metric, even if $N$ is a norm, because
it does not distinguish points (the mapping $E \mapsto M(E)$ is not one-to-one). However, it is a pseudometric.
\end{eg}

\subsection{Combining consensus and distance}
\label{ss:combine}

In order for a rule to be definable via the DR construction, it is
necessary that the first following property holds. The second property avoids trivialities and ensures some theorems in Section~\ref{s:foundation} are true. We shall assume both properties
from now on.

\begin{defn} \label{def:distinguish}
Let $d$ be a distance on $\elecs$ and $\cons$ a consensus. Say that $(\cons, d)$
\hl{distinguishes consensus choices} if whenever $x\in \cons_r, y \in \cons_{r'}$
and $r\neq r'$, then $d(x,y) > 0$.
\end{defn}

We use a distance to extend a consensus to a social rule in the natural
way. The choice at a given election $E$ consists of all $s$-rankings $r$
whose consensus set $\cons_r$ minimizes the distance to $E$. We
introduce the idea of a score in order to use our intuition about
positional scoring rules. 

\begin{defn} (DR scores and rules)
\label{def:rules}

Suppose that $\cons$ is an $s$-consensus and
$d$ a distance on $\elecs$. Fix an election $E\in \elecs$. 
The \hl{$(\cons, d, E)$-score} of $r\in L_s(C^*)$ is defined by
$$
|r| : =  \inf_{E'\in \cons_r} d(E, E').
$$
The rule $R:=\R(\cons, d)$ is defined by
\begin{equation}
\label{eq:argmin}
R(E) = \arg\min_{r} |r|.
\end{equation}
We say that $R$ is \hl{distance rationalizable} (DR) with respect to
$(\cons, d)$.
\end{defn}

\begin{remark}
Note that if $\cons_r$ is empty, then $|r| = \infty$.  DR scores are defined so that they are 
nonnegative, and higher score corresponds to larger distance. This is not consistent with
the usual scoring rule interpretation in Example~\ref{eg:scoring}, but the two notions of score 
 are closely related. Our DR scores have the form $M - s$ where $s$ is the score
associated with the scoring rule and $M$ depends on $E$ but not on any $r\in
L_s(C)$.
\end{remark}

\subsection{Some specific rules}
\label{ss:spec rules}

Table~\ref{t:DR egs} presents a few known rules in this framework. Most of the rules in the 
table are well known. We single out the following less obvious references. The \hl{modal ranking rule} was investigated by Caragiannis, Procaccia and Shah \cite{CPS2014}. The \hl{voter replacement rule} 
(VRR) was 
defined essentially as a missing entry in such a table \cite{EFS2012}. 
The  entries marked ``trivial" are so labelled because in those cases every election not in 
$\cons$ is at distance $+\infty$ from every $\cons_r$. Missing entries reflect on the authors' 
knowledge, 
and may have established names. Our table overlaps with that in \cite{MeNu2008} --- note that the 
$(\cond, d_H^1)$ entry is incorrect in that reference, as pointed out by Elkind, Faliszewski
and Slinko \cite{EFS2012}. Our table also overlaps one presented by Elkind, Faliszewski 
and Slinko \cite{EFS2015}.

\begin{table}
\begin{tabular}{c|cccc}

$\cons / d$ & $\sunam$ & $\wunam$ & $\cond$ & $ \cond^m$  \\
\hline
$d_K^1$ & Kemeny & Borda & Dodgson & \\
$d_H^1$ &modal ranking& plurality & VRR & \\
$d_S^1$ &Litvak &Borda&Dodgson &\\
$d_T$ &Kemeny&Borda&maximin& \\
$d_{RT}$ &Copeland &Copeland & Copeland & Slater\\
$d_{ins}$ &trivial &trivial & maximin  & \\ 
$d_{del}$ &modal ranking & plurality &Young & \\
\end{tabular}
\caption{Some known rules in the DR framework (see discussion in Section~\ref{ss:spec rules})}
\label{t:DR egs}

\end{table}

\begin{eg} (scoring rules)
\label{eg:scoring}
The \hl{positional scoring rule} defined by a family of \hl{weight vectors} $w:=w^{(m)}$ satisfying $w_1
\geq \dots \geq w_m, w_1 > w_m$ elects all candidates with maximal score, where
the score of $a$ in the profile $\pi$ is defined as $\sum_{v\in V}
w_{r(\pi(v), a)}$. The positional scoring rule defined by $w$  has the form $\R(\wunam,
d_w^1)$ where $d_w$ is the distance on rankings defined by $$d_w(\rho,
\rho') = \sum_{c\in C} |w_{r(\rho,c)} - w_{r(\rho',c)}|.$$ 
\end{eg}

\begin{remark}
Note that $d_w$ is a metric on $L_s(C)$ if and only if $w_1, \dots, w_s$
are all distinct. The score of $r$ under the rule defined by $w$ is the 
difference $nw_1 - |r|$. For example, for Borda with $m$ candidates (corresponding to $w =(m-1, m-2, \dots, 1, 0)$, the
maximum possible score of a candidate $c$ is $(m-1)n$, achieved only for
those elections in $\wunam_c$. The score of $c$ under Borda is exactly
$(m-1)n - K$ where $K$ is the total number of swaps of adjacent
candidates needed to move $c$ to the top of all preference orders in
$\pi(E)$. 

Plurality (corresponding to $w = (1,0,0, \dots, 0)$) and Borda are special cases, where $d_w$ simplifies to $d_H^1$
and $d_S^1$ respectively. As far as the distance to $\wunam$ or $\cond$ is
concerned, $d_S^1$ and $d_K^1$ are proportional, but 
they are not proportional in general \cite[p. 298--299]{MeNu2008}. 
\end{remark}

\if01
\begin{remark}
Consider the set of all elections for which all
positional scoring rules yield the same unique winner. By convexity of
the set of weight vectors, it suffices to check the finite set of rules
defined by weight vectors $(1, 1, \dots, 1, 0, \dots , 0)$ where the
number of $1$'s is fixed and at most $k-1$; in other words the \hl{$k$-approval rules} for $1\leq
k \leq m-1$. This just describes the Lorenz consensus in another way.
\end{remark}
\fi

\begin{eg} (Copeland's rule)
\hl{Copeland's rule} can be represented as $\R(\cond, d_{RT})$. Indeed,
in an election $E$, the Copeland score of a candidate $c$  (the number
of points it scores in pairwise contests with other candidates) equals
$n-1-s$, where $s$ is the minimum number of pairwise results that must be changed 
for $E$ to change to an election that belongs to $\cond_c$.
\end{eg}


Every rule $\R(\cons, d)$, where $\cons$ is a $1$-consensus, automatically
yields a social rule $\R^s(\cons, d)$ of size $s$ as follows.

\begin{defn}
\label{def:score}
Let $1\leq s \leq m$ and suppose that $\cons$ is a $1$-consensus and $d$
a distance on $\elecs$. We define a social rule $\R^s(\cons, d)$ of size
$s$ by choosing $s$ elements in increasing order of score (if there are ties in the scores, we consider all possible such orderings).
\end{defn}

\begin{remark}
$\R^s(\cons, d)$ is single-valued if and only if the lowest $s$ scores of candidates are distinct.
Note that if the $s$-consensus $\cons'$ is a restriction of the $1$-consensus $\cons$, it is not necessarily the case that $\R(\cons', d)
= \R^s(\cons, d)$.  For example, $\sunam$ is a restriction of $\wunam$, and $\R(\wunam, d_K^1)$ is the social choice rule, Borda's rule. By above, we can also define the social welfare version of Borda's rule. However, $\R(\sunam, d_K^1)$ is Kemeny's rule. The social choice rule obtained by taking the top element of the ranking given by Kemeny's rule is also sometimes called Kemeny's rule. All four rules mentioned here are different. 
\end{remark}

\section{Existence and uniqueness}
\label{s:foundation}

The DR framework is not very restrictive without further assumptions on
$\cons$ and $d$, as shown by Campbell and Nitzan \cite{CaNi1986}. 

\subsection{Existence}
\label{ss:exist}

We give necessary and sufficient conditions, an improvement on
 \cite[Prop. 4.4]{CaNi1986} and \cite[Thm 2]{EFS2015}.
 
\begin{defn}
\label{def:Kmax}
For each rule $R$, there is a unique \hl{maximum consensus}
$\cons^{\max}(R)$, namely that whose consensus set $D^{\max}$ consists of all
elections on which $R$ gives a unique output, which we define as the consensus choice.
\end{defn}

\begin{remark}
Most rules commonly used in practice have ties, so that the domain of $\cons^{\max}(R)$ is 
smaller than the domain of $R$. For example, if a social choice rule satisfies anonymity (symmetry with respect to voters) and neutrality (symmetry with respect to candidates) and is faced with a profile containing exactly one of each possible preference order, it must select all candidates as winners. 
\end{remark}

\begin{defn}
\label{def:nonimp}
The \hl{unique image} of an $s$-rule $R$ is the set of all $r\in L_s(C)$ which occur as the unique 
winner in some election.  That is, there exists $E\in \elecs$ such that $R(E) = \{r\}$. 
The \hl{image} of the rule is the set of all  $r\in L_s(C)$ which occur as a winner in some election. That is, there exists $E\in \elecs$ such that $r\in R(E)$. 
The rule satisfies \hl{nonimposition} if every $r\in L_s(C)$ occurs as a unique winner somewhere --- 
in other words, the unique image of $R$ equals $L_s(C)$.
\end{defn}

\begin{remark}
Although slightly confusing (the image should perhaps be a set of subsets rather than their union) this is the standard terminology for set-valued mappings in mathematics.
\end{remark}

We need to rule out the possibility of an election being at infinite distance from all consensus elections. There is no problem with an election being equidistant from all nonempty consensus sets, but without this assumption, empty consensus sets will (by convention) also be at the same distance.

\begin{defn}
\label{def:nontriv}
Say that $(\cons, d)$ is \hl{nontrivial} if for each $E\in \elecs$ there is some $r$ for which $d(E, \cons_r) < \infty$.
\end{defn}

\begin{prop}
\label{prop:gen}
Let $\cons$ be a consensus and $R$ a rule. There exists a nontrivial distance rationalization 
$R = \R(\cons, d)$ if and only if the following two conditions hold:
\begin{enumerate}[(i)]
\item $R$ extends $\cons$;
\item the image of $R$ equals the image of $\cons$.
\end{enumerate}
Furthermore, $d$ can be chosen to be a metric.
\end{prop}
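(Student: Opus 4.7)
The plan is to prove the two conditions are necessary, then construct a concrete metric witnessing sufficiency. For the necessity direction, suppose $R = \R(\cons, d)$ with $d$ nontrivial. For any $E \in D(\cons)$ with $\cons(E) = r$, the score $|r|_E$ equals $d(E,E) = 0$; the distinguishing-consensus-choices hypothesis on $d$ rules out $|r'|_E = 0$ for any $r' \neq r$, so $\arg\min_{r'} |r'|_E = \{r\}$ and hence $R$ extends $\cons$. For (ii), if $r$ belongs to the image of $R$ then $|r|_E < \infty$ for some $E$ by nontriviality, forcing $\cons_r$ nonempty and $r$ in the image of $\cons$; conversely, applying (i) to any $E \in \cons_r$ places $r$ in the (unique) image of $R$.

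For sufficiency, I would imitate the first Campbell--Nitzan metric of Example~\ref{eg:weird dist} but relax the strict inclusion to a non-strict one:
\begin{equation*}
d(E, E') = \begin{cases} 0 & \text{if } E = E', \\ 1 & \text{if } |R(E)| = 1 \text{ and } R(E) \subseteq R(E'), \\ 1 & \text{if } |R(E')| = 1 \text{ and } R(E') \subseteq R(E), \\ 2 & \text{otherwise}. \end{cases}
\end{equation*}
This is manifestly symmetric with identity of indiscernibles, and the triangle inequality is immediate since every distance lies in $\{0,1,2\}$ and any two distinct elections are at distance at least $1$. Distinguishing follows from (i): for $x \in \cons_r, y \in \cons_{r'}$ with $r \neq r'$ one has $R(x) = \{r\}$ and $R(y) = \{r'\}$, so neither subset condition applies and $d(x,y) = 2$. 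Nontriviality follows from (ii), since for any $r \in R(E)$ the set $\cons_r$ is nonempty and $d$ is bounded by $2$.

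The final step is to verify $\R(\cons, d) = R$ by case analysis on $|R(E)|$. If $|R(E)| \geq 2$, then for each $r \in R(E)$ and $E' \in \cons_r$ we have $R(E') = \{r\} \subseteq R(E)$, so $|r|_E = 1$; for $r \notin R(E)$ both subset conditions fail, giving $|r|_E \in \{2,\infty\}$. If $R(E) = \{r^*\}$ and $E \in \cons_{r^*}$, then $|r^*|_E = 0$ while other scores remain $2$ or $\infty$. The delicate case is $R(E) = \{r^*\}$ with $E \notin \cons_{r^*}$: here any $E' \in \cons_{r^*}$ has $R(E') = \{r^*\} = R(E)$, and it is crucial that our use of $\subseteq$ (in contrast to the strict $\subset$ of the Campbell--Nitzan metric) produces $d(E, E') = 1$, giving $|r^*|_E = 1$, while $|r|_E = 2$ for every other $r$ in the image. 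This last case is the main obstacle: the literal Campbell--Nitzan metric fails to single out $r^*$ in this situation, so one cannot simply quote Example~\ref{eg:weird dist} and must introduce the non-strict variant to preserve both the metric property and the rationalization of $R$.
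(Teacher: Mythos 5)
Your proof is correct and follows essentially the paper's own route: the same necessity argument, and sufficiency via the Campbell--Nitzan-style distance of Example~\ref{eg:weird dist}, checking the three cases on $|R(E)|$ and whether $E$ lies in the consensus set. Your ``non-strict variant'' is in substance the first distance of Example~\ref{eg:weird dist} as the paper actually uses it (its proof asserts $d(E,F)=1$ when $R(E)=R(F)$ is a singleton and $E\neq F$, i.e.\ it reads $\subset$ as $\subseteq$, and it additionally hedges by allowing the second, pseudometric, distance in that case), so your treatment of the delicate case $R(E)=\{r^*\}$ with $E\notin\cons_{r^*}$ is a clarification of the same construction rather than a different approach.
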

\begin{proof}
The first condition is necessary: because of the assumption that $\cons$ distinguishes consensus choices, if $E\in \cons_r$ then $d(E, \cons_r) = 0$ but $d(E, F) > 0$ for all $F\in \cons_{r'}$ and all $r'\neq r$. The second condition is necessary: the image of $\cons$ is contained in the image of $R$ because $R$ extends $\cons$, and by the nontriviality assumption, if $\cons_r = \emptyset$ then $r$ is not a winner at any election.

Now assume that the two conditions hold. Let $d$ denote either of the first two distances in 
Example~\ref{eg:weird dist} and let $S = \R(\cons, d)$. We claim that $S = R$ (note that  rationalization is nontrivial because the distances are finite). Let $E\in \elecs$. Since $R$ extends $\cons$ the result is immediate if $E\in \cons_r$ for some $r$, because then $R(E) = \{r\}$ and $S(E) = \{r\}$ since $d$ distinguishes consensus choices.
Now suppose that $E$ is not a member of $\cons_r$ for any $r$. Note that 
 $d(E, \cons_r) \geq 2$ if $r\not\in R(E)$. For each $r\in R(E)$, by assumption $r$ is in the image of $\cons$, so there is $F\in \cons_r$ with $d(E, F) = 1$ (for the first and third distances, or the second if $|R(E)| > 1$) or $F\in \cons_r$ with $d(E,F) = 0$ (for the second distance, if $|R(E)| = 1$). Thus $S(E)$ is precisely the set of $r$ for which $r\in R(E)$, in other words $S(E) = R(E)$.
\end{proof}

\begin{cor}
\label{cor:gen}
Let $R$ be a rule. There exists a distance $d$ and consensus $\cons$ such that $R = \R(\cons, d)$ if and only if the image of $R$ equals its unique image.
\end{cor}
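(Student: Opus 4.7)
The plan is to reduce this corollary to Proposition~\ref{prop:gen} by choosing an appropriate consensus, since Proposition~\ref{prop:gen} already fixes $\cons$ and characterizes when $R = \R(\cons, d)$ is possible. The natural candidate for the consensus in the backward direction is the maximum consensus $\cons^{\max}(R)$ from Definition~\ref{def:Kmax}, because its image is visibly controlled by the unique image of $R$.

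For the forward direction, I would assume $R = \R(\cons, d)$ for some consensus $\cons$ and distance $d$. By Proposition~\ref{prop:gen}, the image of $R$ equals the image of $\cons$. Then I would show the image of $\cons$ is contained in the unique image of $R$: for any $r$ in the image of $\cons$, pick $E \in \cons_r$; since $(\cons,d)$ distinguishes consensus choices (our running assumption), $d(E, \cons_r) = 0$ while $d(E, \cons_{r'}) > 0$ for $r' \neq r$, hence $R(E) = \{r\}$. Combined with the trivial containment of the unique image in the image, this forces image of $R$ equals unique image of $R$.

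For the backward direction, I would set $\cons := \cons^{\max}(R)$. By construction $R$ extends $\cons$ (condition (i) of Proposition~\ref{prop:gen}), and the image of $\cons$ is exactly the unique image of $R$, which by hypothesis equals the image of $R$ (condition (ii)). Checking nontriviality is immediate from the fact that the distances in Example~\ref{eg:weird dist} used in Proposition~\ref{prop:gen}'s construction are bounded by $2$. Applying Proposition~\ref{prop:gen} then produces a (nontrivial, metric) $d$ with $R = \R(\cons, d)$.

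The main obstacle, if any, is really just bookkeeping: verifying that ``image of $\cons$" coincides with ``unique image of $R$" when $\cons = \cons^{\max}(R)$, which is essentially unwinding definitions, and making sure the forward direction's containment argument uses the consensus-distinguishing hypothesis correctly. No new constructions beyond those already in Proposition~\ref{prop:gen} and Definition~\ref{def:Kmax} should be needed.
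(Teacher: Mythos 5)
Your proposal is correct and follows essentially the same route as the paper: the paper also takes $\cons=\cons^{\max}(R)$, observes that its image is exactly the unique image of $R$ and that $R$ extends it, and invokes Proposition~\ref{prop:gen}, with the forward direction likewise resting on the necessity part of that proposition together with the consensus-distinguishing assumption. Your write-up just spells out the forward containment more explicitly than the paper does.
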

\begin{proof}
Let $\cons = \cons^{\max}(R)$. The image of $\cons$ is precisely the unique image of $R$, and 
$R$ clearly extends $\cons$, so this follows directly from Proposition~\ref{prop:gen}.
\end{proof}

\begin{remark}
In Proposition~\ref{prop:gen}, if we assume that $R$ and $\cons$ satisfy nonimposition, then condition (ii) is satisfied. In this case $R$ is 
distance rationalizable if and only if it extends $\cons$ \cite[Prop. 4.4]{CaNi1986}. In this case, the third distance from Example~\ref{eg:weird dist} can be used. Note that in general the third distance does not work --- consider a rule for which only one consensus set $\cons_a$ is nonempty, yet the rule returns a disjoint two-element set $\{b,c\}$ at some election $E$. The third distance would then yield $\{a,b,c\}$ at $E$, a contradiction.

However the 
assumption of nonimposition is
not necessary --- consider the rule in which $R(E) = \{r\}$ for every
election $E$, and  choose $d$ to be the discrete metric. 
\end{remark}

Thus if $\cons$ is specified, the question of existence is settled. 
For example, every social welfare rule satisfying the usual unanimity axiom (if every voter has the same preference order, the rule outputs precisely this common ranking) can be rationalized with
respect to $\sunam$. 

In view of the flexibility of the DR framework, it  is clear that the key idea is to 
make an appropriate choice of a ``small" $\cons$ and ``natural" $d$ so as to recapture 
rule $R$ via $R= \R(\cons, d)$.

\subsection{Uniqueness}
\label{ss:unique}

We now turn to the question of uniqueness.  The construction in the proof of 
Proposition~\ref{prop:gen} shows that changing both $\cons$ and $d$ can lead to the same rule. When $\cons$ is fixed and $d$ varies, the rule often changes. However it sometimes does not change, as can be seen from Table~\ref{t:DR egs}. A general class of examples where the rule does not change is discussed in Section~\ref{ss:group}.

Similarly, when $d$ is fixed and $\cons$ varies, 
the rule sometimes does not change. For example, consider Copeland's rule, 
which can be described as $\R(\cond, d_{RT})$. It can also be described as 
$\R(\wunam, d_{RT})$, because for each $a\in C$, every point of $\cond_a$ is at 
distance zero from $\wunam_a$ with respect to $d_{RT}$. 
The standard examples such as Borda's, Kemeny's, and Copeland's rules all behave well when we extend the consensus set beyond the one used to define them. This leads us to the following question: if $R$ has the form $\R(\cons, d)$, and $\cons'$ is a consensus that extends $\cons$,  is it necessarily the case that $R = \R(\cons', d)$? In particular, does $R= \R(\cons^\text{max}(R),d)$? The answer is no in general, as we now show.

\begin{eg}
\label{eg:hillas no}
We will define $R = \R(\cons, d)$. First, let $C=\{a,b,c,c'\}$ and let $V$ be a voter set of size $n$. Let $G$ be the graph having the following two connected components. The first one includes all elections where all voters rank $a$ or $b$ first, and all elections where the number of voters ranking $a$ first equals the number of votes ranking $b$ first. The other component contains all other elections. Now, assume that in each component, only elections differing by one vote are linked. Define $d$ to be the shortest path distance defined by $G$.

Now, define the domain of $\cons_{c'}$ to be the second component and the domain of $\cons_c$ to be the elections where everyone ranks $c$ or $c'$ first. Let $\cons_a$ be the set of elections where $a$ or $b$ are ranked first but $a$ gets more first place than $b$ and $K_b$ be the set of elections where $a$ or $b$ are ranked first but $b$ gets more first place than $a$. Let $R=\R(\cons,d)$.

Consider an election $E$ where $a$ and $b$ get an equal number of votes $x/2$. Clearly, if $x=0$, $E$ is in $K_c$. Else if $x=n$, then $E$ is at distance $1$ from $K_c$ and $K_b$. Indeed, $R(E)=\{c\}$ if and only if $x<n/2$ and $R(E)=\{a,b\}$ if and only if $x>n/2$.

So indeed, $\cons^{\max}(R)$ contains all elections except the ones where $a$ and $b$ get an equal number $x\geq n/2$ of votes. Now, let $R'=\R(\cons^{\max}(R),d)$. We still consider elections $E$ where $a$ and $b$ get an equal number of votes $x/2$, but we note that $R'(E)=\{c\}$ if and only if $x<3n/4$. Thus, $R'\not=R$.
\end{eg}

\section{Quotients}
\label{s:quot}

Symmetries of voting rules occur very often in practice. In this section, we
show how to express distance rationalization using only symmetric objects and functions. 
 We start with general equivalence relations, then equivalence relations induced by actions of symmetry groups, and then consider special cases of such actions. In Sections~ \ref{ss:anon},~\ref{s:homog} and ~\ref{ss:neutral}, we apply the general results to anonymity and homogeneity, neutrality and reversal symmetry.

We use a general equivalence relation  $\sim$ on $\elecs$, which we shall specialize in later sections.  All our definitions in this section are understood to be with respect to $\sim$. For example, we 
may refer to ``compatibility" and ``total compatibility" without mentioning $\sim$ directly.

Let $\overline{E}$ denote the equivalence class of $E$, and let $\quot$ denote the 
set of equivalence classes. The usual quotient map $E \mapsto \overline{E}$ takes $\elecs$ onto $\quot$. 

\begin{defn}
\label{def:commute}
Let $R$ be a partial social rule. Then $R$ is \hl{compatible} with $\sim$ if
$R(E) = R(E')$ whenever $\overline{E} = \overline{E'}$. 
\end{defn}

\begin{remark}
In usual mathematical terms, $R$ is compatible with $\sim$ if and only if it is an \hl{invariant} for $\sim$ or a \hl{morphism} for $\sim$.
\end{remark}

\begin{defn}
If $R$ is compatible then we may define a mapping $\overline{R}$ on $\quot$ via $\overline{R}(\overline{E}) = R(E)$ for every
$E\in \elecs$ (it is well-defined precisely because of compatibility of $R$). We call $\overline{R}$  a \hl{partial social rule on $\quot$}. 
\end{defn}

\begin{remark}
We shall apply this construction later, where $\sim$ is the relation defining anonymity or homogeneity, in which case everything makes sense because the projection to the quotient space does not change the candidate sets. However, if the projection does affect the candidate sets (as with neutrality) the result may look strange and the interpretation rather uninteresting, although the theorems will be correct. For example, a rule compatible with the equivalence relation defining neutrality must be the constant rule which chooses the same $r$ at every election, or the rule that chooses all possible $r$ at every election. We discuss this more in Section~\ref{ss:neutral}.
\end{remark}

\subsection{Totally compatible distances}
\label{ss:tot comp}

\begin{defn}
\label{def:tot comp}
A distance $d$  is \hl{totally compatible} with $\sim$ if $d(E, E') = d(F, F')$ whenever 
$\overline{E}= \overline{F}$ and $\overline{E'} = \overline{F'}$.
\end{defn}

\begin{remark}
\label{r:tot comp}
In usual mathematical terms, $d$ is totally compatible if and only if it is an invariant for the equivalence relation $\sim_2:= \left( \sim \times \sim \right)$ on $\elecs \times \elecs$ for which 
$(a,b) \sim_2 (c,d)$ if and only if $a\sim c$ and $b\sim d$. Provided that $\sim$ is not contained in the identity relation (so some equivalence class 
has size greater than $1$), a totally compatible $d$ is not a quasimetric, because whenever $\overline{E} =  \overline{E'}$, necessarily $d(E, E') = 0$. 
\end{remark}

A totally compatible distance relates directly to a distance on $\quot$. The proof of the next result is immediate from the definitions.

\begin{prop}
\label{prop:quot dist}
The set of distances on $\quot$ and the set of totally compatible distances on $\elecs$ are in bijection under the map $d\leftrightarrow \delta$ defined as follows. Given $d$, define $\delta(\overline{E}, \overline{E'}) = d(E, E')$. Given $\delta$, define $d(E, E') = \delta(\overline{E}, \overline{E'})$. 
\noproof
\end{prop}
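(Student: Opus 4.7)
The plan is to verify four things: well-definedness of each direction, that each constructed object satisfies the distance axioms, and that the two assignments are mutually inverse. All of this reduces to bookkeeping with representatives of equivalence classes.

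Starting from a totally compatible distance $d$ on $\elecs$, I would first check that the proposed $\delta$ is well-defined on $\quot \times \quot$. This is exactly the content of total compatibility: if $F \in \overline{E}$ and $F' \in \overline{E'}$, then by Definition~\ref{def:tot comp} we have $d(F,F') = d(E,E')$, so the value $\delta(\overline{E},\overline{E'}) := d(E,E')$ does not depend on the chosen representatives. Next I would verify that $\delta$ is a distance on $\quot$: non-negativity and the $\infty$-valued range are inherited from $d$; for reflexivity, $\delta(\overline{E},\overline{E}) = d(E,E) = 0$; for the triangle inequality, given classes $\overline{E},\overline{E'},\overline{E''}$, pick any representatives and apply the triangle inequality for $d$, which transfers to $\delta$ by the definition.

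In the other direction, starting from a distance $\delta$ on $\quot$, define $d(E,E') := \delta(\overline{E},\overline{E'})$. Total compatibility is immediate: if $\overline{E} = \overline{F}$ and $\overline{E'} = \overline{F'}$ then $d(E,E') = \delta(\overline{E},\overline{E'}) = \delta(\overline{F},\overline{F'}) = d(F,F')$. The distance axioms for $d$ on $\elecs$ follow directly from those of $\delta$: $d(E,E) = \delta(\overline{E},\overline{E}) = 0$, and the triangle inequality for $d$ at any triple $(E,E',E'')$ is exactly the triangle inequality for $\delta$ at $(\overline{E},\overline{E'},\overline{E''})$.

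Finally, I would check that the two assignments are mutually inverse. Starting from $d$, forming $\delta$ and then the induced distance $d'(E,E') = \delta(\overline{E},\overline{E'}) = d(E,E')$ recovers $d$. Starting from $\delta$, forming $d$ and then the induced quotient distance $\delta'(\overline{E},\overline{E'}) = d(E,E') = \delta(\overline{E},\overline{E'})$ recovers $\delta$.

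There is essentially no hard step here; the only subtlety is in keeping straight which axiom needs which direction, and noting (as already pointed out in Remark~\ref{r:tot comp}) that total compatibility forces $d$ to vanish on non-identity equivalence classes, which is exactly what is needed for $\delta$ to be well-defined on $\quot$.
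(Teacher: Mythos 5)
Your proof is correct: the paper itself gives no argument for this proposition, stating only that it is immediate from the definitions, and your write-up is exactly that routine verification (well-definedness via total compatibility, transfer of the hemimetric axioms in both directions, and the two maps being mutually inverse) spelled out. The only minor quibble is your closing sentence: well-definedness of $\delta$ requires full total compatibility, not merely the vanishing of $d$ on equivalent pairs noted in Remark~\ref{r:tot comp}, but your main argument already uses the right hypothesis.
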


We want to define DR rules on $\quot$. 

\begin{defn}
\label{def:DR quot}
Let $\delta$ be a distance on $\quot$ and $\cons$ a consensus on $\quot$. The rule $\R(K, d)$ is defined using the analogue of \eqref{eq:argmin}.
\end{defn}

\begin{prop}
\label{prop:DR compat}
The following conditions are equivalent for a social rule $R$ on $\elecs$. 
\begin{enumerate}[(i)]
\item $R$ is compatible and distance rationalizable.
\item $R = \R(\cons, d)$ where $\cons$ is compatible and $d$ is totally compatible.
\item $\overline{R}$ is distance rationalizable on $\quot$.
\end{enumerate}
\end{prop}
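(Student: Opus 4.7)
My plan is to establish the cyclic chain $(ii) \Rightarrow (i)$, $(i) \Rightarrow (iii)$, and $(iii) \Rightarrow (ii)$, each essentially by unwinding definitions while leaning on Proposition~\ref{prop:quot dist} for the passage between distances on $\elecs$ and distances on $\quot$.

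For $(ii) \Rightarrow (i)$ the implication is immediate. Distance rationalizability of $R$ is built into the hypothesis. For compatibility, if $E \sim E'$ then total compatibility of $d$ gives $d(E, F) = d(E', F)$ for every $F \in \elecs$, so for each $r$ the DR-score $\inf_{F \in \cons_r} d(\cdot, F)$ agrees at $E$ and $E'$; taking the argmin yields $R(E) = R(E')$.

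For $(i) \Rightarrow (iii)$ I would use Corollary~\ref{cor:gen}. Compatibility of $R$ makes $\overline{R}$ well-defined on $\quot$, and because $\overline{R}(\overline{E}) = R(E)$ the image and unique image of $\overline{R}$ coincide with those of $R$. Distance rationalizability of $R$ (via Corollary~\ref{cor:gen}, its image equals its unique image) therefore transfers verbatim to $\overline{R}$.

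For $(iii) \Rightarrow (ii)$ I would pull back the data from $\quot$ to $\elecs$. Given $\overline{R} = \R(\overline{\cons}, \delta)$, let $d$ on $\elecs$ be the totally compatible distance corresponding to $\delta$ under Proposition~\ref{prop:quot dist}, and define $\cons$ on $\elecs$ by $D(\cons) = \{E : \overline{E} \in D(\overline{\cons})\}$ and $\cons(E) = \overline{\cons}(\overline{E})$; this is well-defined and compatible, and satisfies $\cons_r = \bigcup_{\overline{F} \in \overline{\cons}_r} \overline{F}$. The core check is the score identity
$$ \inf_{F \in \cons_r} d(E, F) \;=\; \inf_{\overline{F} \in \overline{\cons}_r} \delta(\overline{E}, \overline{F}), $$
which holds because $d(E, \cdot)$ is constant on each $\sim$-class by total compatibility, so the infimum over $\cons_r$ and the infimum over a set of class representatives agree. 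Taking the argmin gives $\R(\cons, d)(E) = \R(\overline{\cons}, \delta)(\overline{E}) = \overline{R}(\overline{E}) = R(E)$, where the last equality uses that $R$ is compatible (which is implicit in $(iii)$, as $\overline{R}$ only makes sense then).

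There is no real obstacle here; the proof is a careful exercise in passing to the quotient. The only subtlety is the displayed score identity, and that is handed to us for free by total compatibility together with the bijection in Proposition~\ref{prop:quot dist}.
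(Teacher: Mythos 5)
Your argument is correct, but it traverses the equivalence in the opposite direction from the paper and relocates the key construction. The paper proves (i)$\Rightarrow$(ii) directly and explicitly: it takes $\cons^{\max}(R)$ (Definition~\ref{def:Kmax}) together with the second distance of Example~\ref{eg:weird dist}, and then checks by hand that this distance is totally compatible because it depends only on the sets $R(E)$; the steps (ii)$\Rightarrow$(iii) and (iii)$\Rightarrow$(i) are then the push-down and pull-back that you also use (your (iii)$\Rightarrow$(ii) is essentially the paper's (iii)$\Rightarrow$(i), slightly strengthened since you record that the pulled-back $d$ is totally compatible and $\cons$ compatible, which is fine). Your novelty is in (i)$\Rightarrow$(iii): instead of building a rationalization upstairs and pushing it down, you invoke the existence criterion of Corollary~\ref{cor:gen} — image equals unique image — observe that these sets are the same for $R$ and $\overline{R}$, and conclude that $\overline{R}$ is rationalizable downstairs. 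This buys brevity and lets total compatibility in (ii) come for free from Proposition~\ref{prop:quot dist}, but it leans on the claim that Corollary~\ref{cor:gen} ``transfers verbatim'' to rules on $\quot$, which is not literally what is stated (the corollary is proved for rules on $\elecs$). The claim is true, because the distances of Example~\ref{eg:weird dist} and the proof of Proposition~\ref{prop:gen} use nothing about $\elecs$ beyond the rule's outputs, so they apply to any set on which DR rules are defined (Definition~\ref{def:DR quot}); but you should say this explicitly, since at that point your proof is tacitly re-running the same Campbell--Nitzan-type construction the paper performs upstairs, just on the quotient. With that one sentence added, and a remark that the pulled-back pair in (iii)$\Rightarrow$(ii) inherits the standing nontriviality and consensus-distinguishing assumptions from the quotient pair, your proof is complete and is a clean alternative organization of the same ingredients.
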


\begin{proof} 
Suppose that the first condition holds. 
We use the consensus $\cons:=\cons^{\max}(R)$ from Definition~\ref{def:Kmax} (which is compatible
because $R$ is compatible). We can recapture $R$ as $\R(\cons, d)$ by
defining $d$ to be the second distance in Example~\ref{eg:weird dist}. Let $R' = \R(\cons, d)$. Then if $E \in D(\cons)$, necessarily $R'(E) = R(E)$. If $E \not \in D(\cons)$ then 
$R'(E)$ is precisely the set of $r$ for which $r\in R(E)$, namely $R(E)$. Thus $R' = R$. It remains only to check that $d$ is totally compatible. Since $d$ is defined in terms only of the images $R(E)$ and $R$ is compatible, this follows immediately.

Suppose that the second condition holds. Define $\delta(\overline{E}, \overline{E'}) = d(E, E')$ (this is well-defined since $d$ is totally compatible. Define $K = \overline{K}$. Then $\overline{R} = \R(\overline{K}, \delta)$.

Finally, suppose that the third condition holds. Define $\cons, d$ by composing $K, \delta$ with the projection to $\quot$. Then $R = \R(\cons, d)$ and $R$ is compatible since $R(E) = R(\overline{E})$.
\end{proof}

The distance and consensus used in the proof of Proposition~\ref{prop:DR compat} are rather unnatural (note that the first and third distances in Example~\ref{eg:weird dist} would not even work, being metrics).  We now consider more natural constructions that relate to the original distance and use the equivalence relation explicitly.

\subsection{Quotient distances}
\label{ss:quot dist}

When dealing with equivalence classes, the obvious idea is to use a quotient distance \cite{DeDe2009}. This concept is relatively little-known.

\begin{defn}
\label{def:quot dist}
We define $\overline{d}: \quot \times \quot \to \reals_+$ to be the \hl{quotient distance}
induced by $\sim$. 
\end{defn}

\begin{remark}
The standard construction of quotient distance $\overline{d}$ is as follows:
\begin{equation}
\label{eq:quot}
\overline{d}(x,y)  = \inf \sum_{i=1}^k d(E_i, E'_i)
\end{equation}
where the infimum is taken over all  \emph{admissible paths}, namely
paths such that $E'_i \sim E_{i+1}$ for $1\leq i \leq k-1$, $E=E_1, E' = E'_k$, $E$ projects
to $x$ and $E'$ to $y$. 
\end{remark}

We now focus on a special situation where $\overline{d}$ has a much simpler formula.

\begin{defn}
\label{def:d bar}
Let $d$ be a distance on $\elecs$. Define $\dt$ on $\quot$ by
$$
\dt(x,y) = \inf_{\overline{E} = x, \overline{E'} = y} d(E,E').
$$
\end{defn}
\begin{remark}
If $d$ is totally compatible with $\sim$, then in the definition of
$\dt$, all the distances on the right are the same, so that $\dt(x,y) = d(E, E')$ whenever 
$\overline{E} = x, \overline{E'} = y$. 
\end{remark}

\begin{defn}
\label{def:simple}
A distance on $\elecs$ is \hl{simple} for $\sim$ if $\dt = \overline{d}$.
\end{defn}

\begin{remark}
Every totally compatible distance is simple, because in the definition of 
$\overline{d}$  the minimum is achieved when $k=1$ (use induction on $k$ and the inequality 
$d(E, E_1') + d(E_2, E_2') = d(E, E_1') + d(E_1', E_2') \leq d(E, E_2')$).
\end{remark}

Non-simple distances do arise in our framework.

\begin{eg}
\label{eg:nonsimple}
Let $\sim$ be the equivalence relation on $\elecs$ defined as follows: $E= (C, V, \pi) \sim E'= (C, V, \pi')$ if and only if there are precisely two top-ranked candidates in all the votes in $\pi \cup \pi'$ (each other election forms its own singleton equivalence class). If the two candidates in question are $x, y$, denote by $\elecs_{xy}$ the equivalence class so defined. 

Let $d= d_H$ and suppose $E = (C, V, \pi)\in \elecs_{ab}, E'= (C, V, \pi')\in \elecs_{cd}$, where $\{a,b\} \cap \{c,d\} = \emptyset$. Then $\dt(\overline{E}, \overline{E'})=n$. However 
$\overline{d}(\overline{E}, \overline{E'}) = 2$, since $E$ and $E'$ are each equivalent to elections that are at distance $1$ from $\elecs_{bc}$. Thus $d_H$ is not simple for  $\sim$.
\end{eg}

\begin{prop}\label{prop:simple compat}
Let $d$ be a simple distance and $\cons$ a compatible consensus.  
Then for every $r$ and $E$, $\overline{d}(\overline{E},
\overline{\cons}_r) = d(E, \cons_r)$. 
Thus if $\R(\cons, d)$ is compatible, then it satisfies
$\overline{\R(\cons, d)} = \R(\overline{\cons}, \overline{d})$.
\end{prop}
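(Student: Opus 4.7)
The plan is to establish the distance equality $\overline{d}(\overline{E}, \overline{\cons}_r) = d(E, \cons_r)$ by unwinding both sides using the hypotheses, and then derive the rule equality from it via a routine comparison of argmins.

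For the distance equation, I would first write $\overline{d}(\overline{E}, \overline{\cons}_r) = \inf_{G \in \cons_r} \overline{d}(\overline{E}, \overline{G})$. Simplicity ($\overline{d} = \dt$) rewrites each term as $\inf_{F \sim E,\, F' \sim G} d(F, F')$. Compatibility of $\cons$ means $\cons_r$ is a union of equivalence classes, so the compound condition ``there exists $G \in \cons_r$ with $F' \sim G$'' is equivalent to the single condition $F' \in \cons_r$. Swapping the order of the two infima then yields
\[
\overline{d}(\overline{E}, \overline{\cons}_r) = \inf_{F \sim E,\; F' \in \cons_r} d(F, F').
\]
The inequality $\overline{d}(\overline{E}, \overline{\cons}_r) \leq d(E, \cons_r)$ is immediate on restricting to $F = E$. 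For the reverse inequality, given $F \sim E$ and $F' \in \cons_r$, I would invoke the triangle inequality $d(E, F') \leq d(E, F) + d(F, F')$ together with $\dt(\overline{E}, \overline{E}) = \overline{d}(\overline{E}, \overline{E}) = 0$ to argue that the overhead $d(E, F)$ can be absorbed into an $\varepsilon$-error, producing $F'' \in \cons_r$ with $d(E, F'')$ no larger than $d(F, F')$ up to arbitrary $\varepsilon > 0$. Taking infima delivers $d(E, \cons_r) \leq \overline{d}(\overline{E}, \overline{\cons}_r)$.

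With the distance equation in hand the rule equation is almost formal: by Definition~\ref{def:rules},
\[
\R(\cons, d)(E) = \arg\min_{r} d(E, \cons_r) = \arg\min_{r} \overline{d}(\overline{E}, \overline{\cons}_r) = \R(\overline{\cons}, \overline{d})(\overline{E}),
\]
and since $\R(\cons, d)$ is assumed compatible, $\overline{\R(\cons, d)}(\overline{E}) := \R(\cons, d)(E)$ is well-defined and coincides with the right-hand side. The main obstacle is the reverse inequality in the distance equation. Simplicity is weaker than total compatibility (cf.\ the remark after Definition~\ref{def:simple}); it does not force $d(E, F) = 0$ when $E \sim F$, so the triangle-inequality bookkeeping needed to absorb $d(E, F)$ into an $\varepsilon$-error is delicate and is where the careful use of minimising sequences in $\dt(\overline{E}, \overline{E}) = 0$ must enter.
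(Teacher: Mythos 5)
Your reduction of $\overline{d}(\overline{E},\overline{\cons}_r)$ to the double infimum $\inf_{F\sim E,\,F'\in\cons_r}d(F,F')$ is correct, and in fact more faithful to Definition~\ref{def:d bar} than the argument you are being compared against. But the reverse inequality is a genuine gap, exactly where you flagged it, and the $\varepsilon$-absorption you propose cannot work: $\dt(\overline{E},\overline{E})=0$ holds for \emph{every} distance (take both representatives equal to $E$), so it carries no information, and in particular it gives no bound on $d(E,F)$ for the specific $F\sim E$ that (approximately) realizes the infimum, which is what your triangle inequality $d(E,F')\leq d(E,F)+d(F,F')$ would need. Moreover no bookkeeping can rescue this step at the stated level of generality: take three elections $E\sim F$ and $G$, with $G$ alone in its class and $\cons_r=\{G\}$, and let $d(E,F)=6$, $d(E,G)=5$, $d(F,G)=1$, symmetric, with $d$ equal to $\infty$ between all remaining pairs of distinct elections (and every remaining election placed in its own consensus class, so the standing nontriviality assumption holds). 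The triangle inequality holds since $6=5+1$; one checks directly that $\dt=\overline{d}$, so $d$ is simple and $\cons$ is compatible; yet $\overline{d}(\overline{E},\overline{\cons}_r)=1\neq 5=d(E,\cons_r)$. So ``simple $+$ compatible'', with $\dt$ read literally as an infimum over both coordinates, yields only your formula $\overline{d}(\overline{E},\overline{\cons}_r)=\inf_{F\sim E}d(F,\cons_r)$, not the claimed equality.

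For comparison, the paper's own proof does not solve this difficulty but sidesteps it: in the step justified ``by definition of $\dt$'' it evaluates $\dt(\overline{E},\overline{E'})$ as $\min_{E''\sim E'}d(E,E'')$, i.e.\ it silently freezes the first argument at $E$ and drops the infimum over $F\sim E$ --- precisely the term you identified as delicate. That reading is legitimate in the settings where the proposition is subsequently used: when $\sim$ is induced by a group $G$ and $d$ is $G$-equivariant (Propositions~\ref{prop:simple} and~\ref{prop:nice dist}), one has $\min_{g}d(g(E),E')=\min_{g}d(E,g(E'))$, so the first-coordinate infimum transfers to the second and the missing inequality $d(E,\cons_r)\leq\inf_{F\sim E}d(F,\cons_r)$ follows; more generally a CMP-type hypothesis, $d(F,\cons_r)=d(E,\cons_r)$ for all $F\sim E$ (Definition~\ref{def:CMP}), does the job. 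So the repair is to strengthen the hypotheses (or reinterpret $\dt$), not to push the triangle-inequality argument. Your final step is fine: once the distance equality is available, the argmin comparison giving $\overline{\R(\cons,d)}=\R(\overline{\cons},\overline{d})$ is routine, modulo the caveat that with infinite equivalence classes an infimum can create new ties, which is another reason the first part, and not the ``thus'' part, is where all the content lies.
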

\begin{proof}
We have 
$$
\overline{d}(\overline{E}, \overline{\cons}_r)  
 = \min_{E'\in \cons_r} \overline{d}(\overline{E}, \overline{E'}) 
= \min_{E'\in \cons_r} \dt(\overline{E}, \overline{\cons}_r)
 = \min_{E'\in\cons_r} \min_{E''\sim E'} d(E, E'') 
 = \min_{E''\in \cons_r} d(E, E'') = d(E, \cons_r).
$$
The first equality holds by definition of distance to a set, the second because $d$ is simple, the 
third by definition of $\dt$, the fourth by compatibility of $\cons$ and the fifth for the same reason as 
the first.

Now let $S = \R(\overline{K}, \overline{d})$. Then $S(\overline{E}) = R(E)$ by above, for all $E$. If $R:=\R(\cons, d)$ is compatible then $\overline{R}$ exists and $\overline{R}(\overline{E}) = R(E)$ for all $E$. Thus $S=\overline{R}$.
\end{proof}

\begin{remark}
The condition that the distance be simple is necessary. For example, consider the setup of Example~\ref{eg:nonsimple}, where the consensus sets have the form $\elecs_{xy}$. 
\end{remark}

\begin{remark}
The condition that $\R(\cons, d)$ is compatible is not always satisfied, as we see when studying homogeneity in Section~\ref{s:homog}. In Section~\ref{ss:group} we give sufficient conditions for it to be satisfied automatically. 
\end{remark}

\subsection{Symmetry groups}
\label{ss:group}

Equivalence is a form of symmetry between elections. Proposition~\ref{prop:simple compat} is clearly useful, but the only simple distances we have seen so far are 
totally anonymous ones, for which the result is obvious. We introduce a strengthening of equivalence that will yield simple distances. We apply it 
in later subsections to discuss anonymity, neutrality, reversal symmetry and homogeneity.

We recall some basics of the theory of group actions on sets. Let $X$ be a set and $G$ a subgroup of the group of all permutations of $X$. The \hl{orbit} of $x\in X$ under $G$ is the set of all $g(x)$ as $g$ ranges over $G$.

\begin{defn}
\label{def:equiv}
Let $\sim$ be an equivalence relation on $\elecs$ and let $G$ be a group acting on $\elecs$ via morphisms. In other words, for each $g\in G$, $E \sim E'$ implies $g(E) \sim g(E')$.
If the equivalence classes of $\sim$ are precisely the orbits under the action of $G$, then we say $\sim$ is \hl{induced by} $G$.

The distance $d$ is \hl{$G$-equivariant} if $G$ acts via isometries:
$$
d(g(E), g(E')) = d(E, E') \qquad \text{for all $E, E' \in \elecs, g\in G$}.
$$
The partial social rule $\cons$ is \hl{$G$-invariant} if $R(g(E)) = R(E)$ for all $E\in D(\cons), g\in G$.
\end{defn}

\begin{prop}
\label{prop:DR invariant}
Suppose that $G$ is a group that induces $\sim$ via an action on $\elecs$. The following conditions are equivalent for a social rule $R$.
\begin{enumerate}[(i)]
\item $R$ is $G$-invariant and distance rationalizable.
\item $R = \R(\cons, d)$  where $\cons$ is $G$-invariant and $d$ is $G$-invariant.
\item $R = \R(\cons, d)$ where $\cons$ is $G$-invariant and $d$ is $G$-equivariant.
\item $\overline{R}$ is distance rationalizable on $\quot$.
\end{enumerate}
\end{prop}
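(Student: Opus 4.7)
The plan is to close the cycle (i) $\Rightarrow$ (iv) $\Rightarrow$ (ii) $\Rightarrow$ (iii) $\Rightarrow$ (i). The observation that short-circuits most of the work is that when $\sim$ is the orbit relation of the $G$-action, compatibility of a partial social rule with $\sim$ in the sense of Definition~\ref{def:commute} is identical to $G$-invariance in the sense of Definition~\ref{def:equiv}, and total compatibility of a distance (Definition~\ref{def:tot comp}) is the natural reading of ``$G$-invariant'' for the two-argument object $d$. Under this identification Proposition~\ref{prop:DR compat} does essentially all the heavy lifting.

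The equivalence (i) $\Leftrightarrow$ (iv) is then just the equivalence of conditions (i) and (iii) of Proposition~\ref{prop:DR compat}, so no additional argument is needed. For (iv) $\Rightarrow$ (ii), I reuse the construction at the end of the proof of Proposition~\ref{prop:DR compat}: starting from a rationalization $\overline{R} = \R(\overline{\cons}, \delta)$ on $\quot$, I pull $\overline{\cons}$ and $\delta$ back along the projection $E \mapsto \overline{E}$ to define $\cons$ and $d$ on $\elecs$. The pulled-back $\cons$ depends only on the orbit and is therefore $G$-invariant, and $d(E,E') = \delta(\overline{E},\overline{E'})$ is totally compatible by construction (so in particular $G$-invariant in the sense meant here). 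The equality $R = \R(\cons, d)$ is verified exactly as in Proposition~\ref{prop:DR compat}.

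The implication (ii) $\Rightarrow$ (iii) is immediate: since $g(E) \sim E$ and $g(E') \sim E'$, total compatibility of $d$ forces $d(g(E), g(E')) = d(E, E')$, which is $G$-equivariance. For (iii) $\Rightarrow$ (i), I fix $g \in G$ and compare $(\cons, d)$-scores at $E$ and $g(E)$. The $G$-invariance of $\cons$ implies $g(\cons_r) = \cons_r$ for every $r$ (using that $g$ is a bijection, and applying the same inclusion to $g^{-1}$), so the substitution $E' = g(E'')$ is legitimate inside the infimum defining the score of $r$ at $g(E)$; combining this with $G$-equivariance of $d$ gives
\[
\inf_{E' \in \cons_r} d(g(E), E') \;=\; \inf_{E'' \in \cons_r} d(g(E), g(E'')) \;=\; \inf_{E'' \in \cons_r} d(E, E''),
\]
so the scores of $r$ at $g(E)$ and at $E$ coincide, whence $R(g(E)) = R(E)$.

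The only real subtlety is definitional rather than mathematical: the reading of ``$G$-invariant'' for the distance $d$ in (ii) must be taken as total compatibility, which is strictly stronger than $G$-equivariance and is exactly what is needed for $d$ to descend to a well-defined distance on $\quot$ via Proposition~\ref{prop:quot dist}. Once that convention is fixed, everything else is bookkeeping on top of Proposition~\ref{prop:DR compat} and the one-line score-reparametrization argument above.
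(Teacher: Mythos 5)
Your proposal is correct and follows essentially the same route as the paper: it leans on Proposition~\ref{prop:DR compat} (reading $G$-invariance of $R$ and $\cons$ as compatibility and $G$-invariance of $d$ as total compatibility) for the (i)/(ii)/(iv) equivalences, gets (ii)$\Rightarrow$(iii) for free, and closes with the same $g(\cons_r)=\cons_r$ score computation for (iii)$\Rightarrow$(i). Your explicit remark about the intended meaning of ``$G$-invariant'' for $d$ matches the paper's implicit convention, so no changes are needed.
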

\begin{proof}
The first, second and fourth parts are equivalent by Proposition~\ref{prop:DR compat}. The second implies the third by definition. Suppose that the third condition holds. It remains to show that $R$ is $G$-invariant. Fix arbitrary $r$ and $g\in G$. Since $\cons$ is $G$-invariant, $g(\cons_r) = \cons_r$.
Then $d(E,\cons_r)= d(g(E), g(\cons_r)) = d(g(E), \cons_r)$. Thus $R(E) = R(g(E))$, yielding the second condition.
\end{proof}

However, the proof of Proposition~\ref{prop:DR invariant} does not give a relationship between the distances used in parts (ii) and (iii). We now proceed to clarify this. We first give an important sufficient condition for a distance to be simple.

\begin{prop}
\label{prop:simple}
Let $G$ be a group, and suppose that $\sim$ is induced by $G$, while $d$ is $G$-equivariant. Then $d$ is simple. 
Furthermore
$$
\overline{d}(\overline{E}, \overline{E}') = \min_{g\in G} d(E,g(E')) = \min_{g\in G} d(g(E), E').
$$
\end{prop}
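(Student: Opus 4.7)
The plan is to prove three equalities that together yield both the formula and simplicity. Let $M(E,E') := \inf_{g \in G} d(E, g(E'))$ and $M'(E,E') := \inf_{g \in G} d(g(E), E')$. I will show (a) $M(E,E') = M'(E,E')$; (b) $\dt(\overline{E},\overline{E'}) = M(E, E')$; and (c) $\overline{d}(\overline{E}, \overline{E'}) = \dt(\overline{E}, \overline{E'})$, which by Definition~\ref{def:simple} is the statement that $d$ is simple.

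Equalities (a) and (b) are straightforward. For (a), $G$-equivariance gives $d(g(E), E') = d(g^{-1}g(E), g^{-1}(E')) = d(E, g^{-1}(E'))$, and $g \mapsto g^{-1}$ is a bijection of $G$. For (b), the fact that $\sim$ is induced by $G$ means $\overline{E} = \{g_1(E) : g_1 \in G\}$ and similarly for $\overline{E'}$, so by definition of $\dt$ we have $\dt(\overline{E}, \overline{E'}) = \inf_{g_1, g_2 \in G} d(g_1(E), g_2(E'))$. Applying $G$-equivariance by $g_1^{-1}$ rewrites this as $\inf_{g_1, g_2} d(E, g_1^{-1}g_2(E'))$, and as $g_1, g_2$ vary independently $g_1^{-1} g_2$ ranges over all of $G$; this gives $\dt = M$.

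The real work is (c). The inequality $\overline{d} \leq \dt$ is immediate, since for any representatives $E$ of $x$ and $E'$ of $y$ the single pair $(E,E')$ is an admissible path of length $k=1$ in \eqref{eq:quot}. For the reverse, I will prove by induction on $k$ the following claim: for every admissible path $E_1, E_1', E_2, E_2', \dots, E_k, E_k'$ with $E_1 = E$ and $E_k' = E'$, there exists $g \in G$ such that $d(E, g(E')) \leq \sum_{i=1}^k d(E_i, E_i')$. Combined with (b), this forces $\dt(\overline{E}, \overline{E'}) \leq \sum d(E_i, E_i')$ for every admissible path, hence $\dt \leq \overline{d}$. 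The base case $k=1$ is trivial with $g = e$. For the inductive step, pick $h \in G$ with $h(E_1') = E_2$ (exists since $E_1' \sim E_2$ and orbits equal equivalence classes) and apply $h^{-1}$ to the tail $E_2, E_2', \dots, E_k'$: equivariance preserves each distance $d(E_i, E_i')$, and because $G$ permutes orbits the equivalences $E_i' \sim E_{i+1}$ are preserved as $h^{-1}(E_i') \sim h^{-1}(E_{i+1})$. Since $E_1' = h^{-1}(E_2)$, the triangle inequality collapses the first two segments: $d(E_1, h^{-1}(E_2')) \leq d(E_1, E_1') + d(h^{-1}(E_2), h^{-1}(E_2')) = d(E_1, E_1') + d(E_2, E_2')$. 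The result is an admissible path of length $k-1$ ending at $h^{-1}(E')$; the inductive hypothesis gives $g' \in G$ with $d(E, g'h^{-1}(E'))$ bounded by the full sum, and $g := g' h^{-1}$ finishes the step.

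The main obstacle is the bookkeeping for this induction: one must check that applying $h^{-1}$ to the tail keeps it admissible (automatic since $G$ acts by morphisms of $\sim$) and that equivariance can be applied term-by-term to the translated tail. Once this is organized, the only analytic input is the triangle inequality. Finally, combining (a), (b), and (c) gives $\overline{d}(\overline{E}, \overline{E'}) = \dt(\overline{E},\overline{E'}) = M(E,E') = M'(E,E')$, which is exactly the conclusion.
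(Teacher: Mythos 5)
Your proof is correct and uses essentially the same mechanism as the paper's: pick a group element aligning one link of an admissible path with the next ($h(E_1')=E_2$ in your version, $g(E_k)=E'_{k-1}$ in the paper's), then use $G$-equivariance and the triangle inequality to merge two segments, shortening the path; the identification $\dt(\overline{E},\overline{E'})=\min_{g\in G}d(E,g(E'))=\min_{g\in G}d(g(E),E')$ is likewise the paper's closing observation, which you just spell out. The only difference is organizational: you induct on the path length $k$ over all admissible paths, whereas the paper argues by contradiction from a minimum-length path achieving the infimum in \eqref{eq:quot}, so your bookkeeping is marginally more careful about attainment but the idea is identical.
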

\begin{proof} 
We show that for each $x,y\in \quot$, the minimum value of $k$ for
paths achieving the minimum in \eqref{eq:quot} is always $1$. Assume that this is not the case, so
there exist $x,y$, a minimum $k>1$ and admissible paths such that
$$
\overline{d}(x,y) = \sum_{i=1}^k d(E_i, E'_i).
$$

Choose  $g,h\in G$ so that $g(E_k) = E'_{k-1}$ (possible since $\sim$ is induced by $G$).
Then by  $G$-equivariance and the triangle inequality
\begin{align*}
d(E_{k-1}, E'_{k-1}) + d(E_k, E'_{k}) &= d(E_{k-1}, E'_{k-1}) + d(g(E_k), g(E'_{k})) \\ & =  d(E_{k-1}, E'_{k-1}) + d(E'_{k-1}, g(E'_k)) \\
&\geq d(E_{k-1}, g(E'_k)).
\end{align*}
This contradicts the minimality of $k$, and this contradiction shows that $d$ is simple. 
The other formulae for $\dt$ 
follow immediately, because  all $E'$ projecting to
$y$ are equivalent, so that each can map onto any other via some $g$.
\end{proof}

\begin{prop}
\label{prop:nice dist}
Suppose that $R = \R(\cons, d)$ where $\cons$ is $G$-invariant and $d$ is $G$-equivariant.
Then $\overline{R} = \R(\overline{\cons}, \dt)$.
\end{prop}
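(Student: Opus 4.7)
The plan is to reduce this statement to a direct combination of the two preceding propositions, by checking that all of their hypotheses are satisfied in this setting.

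First I would observe that the hypotheses of Proposition~\ref{prop:DR invariant} are in place: $\cons$ is $G$-invariant and $d$ is $G$-equivariant, with $\sim$ induced by $G$. Condition~(iii) of that proposition therefore gives us, via its proof, that $R = \R(\cons,d)$ is $G$-invariant, hence compatible with $\sim$, so the quotient rule $\overline{R}$ is well-defined on $\quot$. I would also note that $G$-invariance of $\cons$ immediately translates into compatibility of $\cons$ with $\sim$, since equivalence classes are exactly $G$-orbits and $\cons$ is constant on orbits.

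Next I would invoke Proposition~\ref{prop:simple}: since $\sim$ is induced by $G$ and $d$ is $G$-equivariant, $d$ is simple for $\sim$, meaning $\overline{d} = \dt$.

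With simplicity of $d$, compatibility of $\cons$, and compatibility of $\R(\cons,d)$ all established, I can apply Proposition~\ref{prop:simple compat} directly to conclude
$$
\overline{R} \;=\; \overline{\R(\cons,d)} \;=\; \R(\overline{\cons}, \overline{d}) \;=\; \R(\overline{\cons}, \dt),
$$
where the last equality uses $\overline{d} = \dt$ from the previous step. This is exactly the claim.

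There is no real obstacle here; the proof is essentially a bookkeeping exercise, verifying that the $G$-action hypotheses imply the weaker compatibility/simplicity hypotheses of Propositions~\ref{prop:DR compat},~\ref{prop:simple compat}, and~\ref{prop:simple}. The only subtlety worth flagging in writing is the trivial but easy-to-miss step that $G$-invariance of $\cons$ gives compatibility with $\sim$, and the observation that compatibility of the rule $\R(\cons,d)$ — which is what Proposition~\ref{prop:simple compat} requires — is supplied by the $G$-invariance half of Proposition~\ref{prop:DR invariant}.
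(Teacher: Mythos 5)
Your proof is correct and follows essentially the same route as the paper's: invoke Proposition~\ref{prop:DR invariant} to get $G$-invariance (hence compatibility) of $R$ so that $\overline{R}$ exists, Proposition~\ref{prop:simple} for simplicity of $d$, and Proposition~\ref{prop:simple compat} to conclude, with $\overline{d}=\dt$ coming from the definition of simplicity. The extra bookkeeping you flag (that $G$-invariance of $\cons$ gives compatibility with $\sim$) is a harmless elaboration of what the paper leaves implicit.
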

\begin{proof}
$R$ is $G$-invariant by Proposition~\ref{prop:DR invariant}, so $\overline{R}$ exists. By Proposition~\ref{prop:simple}, $d$ is simple. The result follows from Proposition~\ref{prop:simple compat}.
\end{proof}

\begin{remark} Also $R = \R(\cons, d')$ 
where $d'$ is $G$-invariant and $\overline{d'} = \dt$.  In fact 
$d'(E,E')$ equals  the familiar quantity $\min_{g,g'\in G} d(g(E), g'(E'))$.
\end{remark}

\subsection{Neutrality and reversal symmetry}
\label{ss:neutral}

Proposition~\ref{prop:DR invariant} does not apply to the study of these properties, because rather than $R(g(E)) = R(E)$, we want $R(g(E)) = g(R(E))$: the output of the rule changes in a consistent way. This is because the group action changes the candidates, unlike the case with anonymity.

A social rule of size $s$ is a mapping taking each election to a set of $s$-rankings. If a group $G$ acts on the set of rankings, then there is a natural induced action on social rules: $g(R)$ is the rule for which $g(R)(E) = g(R(E))$. If the group acts on $s$-rankings for all $s$, we can say more.

\begin{defn}
Suppose that $G$ is a group acting on $L(C^*)$ such that it maps $L_s(C^*)$ to itself for every $s$. The partial social rule $R$ is \hl{$G$-equivariant} if the identity $R(g(E)) = g(R(E))$ holds.
\end{defn}

An example of this is \hl{reversal symmetry}. A social welfare rule satisfies reversal symmetry if turning all input rankings upside down also reverses the output. The group in question is the group of order $2$, generated by $g$ say. All our examples so far of distances on rankings have satisfied reversal symmetry. 

\begin{remark}
Reversal symmetry has been defined for social choice rules as: if $a$ is the unique winner in the original profile, then $a$ is not a winner in the reversed profile.
For example, the social welfare version of Borda's rule satisfies this. However for social choice rules which do not come from social welfare rules, this is inconsistent with our definition above. In the case of DR rules, there is no difficulty, because each social choice rule yields a social welfare rule as in Definition~\ref{def:score}.
\end{remark}

It is straightforward to prove an analogue of Proposition~\ref{prop:DR invariant} that extends to the case of $G$-equivariant rules. We omit the details.

\begin{prop}
\label{prop:reversal}
If $\cons$ and $d$ satisfy reversal symmetry then so does $\R(\cons, d)$. 
Conversely, if $R$ satisfies reversal symmetry then $R = \R(\cons, d)$ where $\cons$ and $d$ both do.
\noproof
\end{prop}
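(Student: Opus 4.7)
The plan is to handle each direction separately, relying on the analogue of Proposition~\ref{prop:DR invariant} for $G$-equivariant rules (which the authors have just remarked is straightforward and omitted). Set $G = \{e, g\}$ where $g$ denotes the reversal operator on rankings, extended diagonally to an action on profiles and elections, and contravariantly (i.e.\ $g(R)(E) = g(R(E))$) to an action on rules. Reversal symmetry of $\cons$, $d$, or $R$ means $G$-equivariance in the sense already defined. The basic bookkeeping identity that I will use repeatedly is that $G$-equivariance of a partial social function $\cons$, i.e.\ $\cons(g(E)) = g(\cons(E))$, translates into $\cons_{g(r)} = g(\cons_r)$ at the level of the fibres.

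For the forward direction, assume $\cons$ and $d$ satisfy reversal symmetry and let $R = \R(\cons, d)$. Fix $E \in \elecs$ and $r \in L_s(C)$. Using $\cons_{g(r)} = g(\cons_r)$ together with $G$-equivariance of $d$, I compute
\begin{align*}
d(g(E),\cons_{g(r)}) \;=\; \inf_{F \in \cons_{g(r)}} d(g(E), F) \;=\; \inf_{F' \in \cons_r} d(g(E), g(F')) \;=\; \inf_{F'\in \cons_r} d(E,F') \;=\; d(E,\cons_r).
\end{align*}
Hence the $(\cons, d, g(E))$-score of $g(r)$ equals the $(\cons, d, E)$-score of $r$, and since $r \mapsto g(r)$ is a bijection on $L_s(C^*)$ the two argmin sets correspond: $R(g(E)) = g(R(E))$.

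For the converse, assume $R$ is distance rationalizable and satisfies reversal symmetry. I take $\cons := \cons^{\max}(R)$ and let $d$ be the second distance from Example~\ref{eg:weird dist}. I must verify that both are $G$-equivariant. For $\cons$: reversal symmetry gives $|R(g(E))| = |g(R(E))| = |R(E)|$, so the domain $D^{\max}$ is $G$-stable, and on this domain $\cons^{\max}(R)(g(E)) = R(g(E)) = g(R(E)) = g(\cons^{\max}(R)(E))$. For $d$: its value on a pair $(E,E')$ depends only on the relations of cardinality and set inclusion among $R(E)$ and $R(E')$, all of which are preserved under the bijection $g$, so $d(g(E), g(E')) = d(E, E')$. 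The proof of Proposition~\ref{prop:gen} (combined with the construction used inside Proposition~\ref{prop:DR compat}) already shows that $R = \R(\cons^{\max}(R), d)$, so this gives the required rationalization.

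The main point of care, and what the authors are glossing when they write ``we omit the details'', is precisely that the group does not fix the ranking space, so one cannot invoke Proposition~\ref{prop:DR invariant} directly; the proofs of both directions must re-derive the argmin transformation from the identity $\cons_{g(r)} = g(\cons_r)$ rather than from $\cons_r = g(\cons_r)$ as in the $G$-invariant case. Once this small adjustment is made, both halves of the proposition reduce to the same elementary computation, and the construction from Example~\ref{eg:weird dist} furnishes a witness for the converse.
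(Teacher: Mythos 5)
Your proof is correct and follows exactly the route the paper indicates but leaves unwritten: it instantiates the $G$-equivariant analogue of Proposition~\ref{prop:DR invariant} for the order-two reversal group, deriving the argmin correspondence from $\cons_{g(r)}=g(\cons_r)$ in the forward direction and using $\cons^{\max}(R)$ together with the second distance of Example~\ref{eg:weird dist} for the converse, just as in the proofs of Propositions~\ref{prop:gen} and~\ref{prop:DR compat}. Your explicit insertion of the hypothesis that $R$ be distance rationalizable in the converse is the right reading of the statement (compare Proposition~\ref{prop:neut}), so the write-up simply supplies the details the paper omits.
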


For example, since $\sunam, d_H^1$ and $d_K^1$ each satisfy reversal symmetry, so do the modal ranking rule and Kemeny's rule. 

If a group $G$ acts on $C^*$ then there is a natural induced action on $s$-rankings, whereby the ranking $a_1\cdots a_s$ maps to $g(a_1)\cdots g(a_s)$. An example of this involves \hl{neutrality}. In this case $G$ the group of all permutations of the candidates. 
Neutrality is a very natural
condition for consensuses and for distances, and is satisfied by all our main examples. It means that the identities of candidates are not relevant because each candidate is treated symmetrically. 

\begin{prop}
\label{prop:neut}
Let $\cons$ be a neutral consensus and $d$ a neutral distance. Then $\R(\cons, d)$ is neutral. Conversely, if $R$ is neutral and distance rationalizable then $R = \R(\cons, d)$ where 
$\cons$ and $d$ are neutral.
\noproof
\end{prop}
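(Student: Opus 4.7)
The plan is to treat this as the symmetric-group analogue of Proposition~\ref{prop:reversal}, following the equivariant variant of Proposition~\ref{prop:DR invariant} that the authors mention immediately after that proposition. Here the relevant group is $G = \sym(C)$ for a fixed finite candidate set $C$, acting on elections by relabelling candidates in each vote and on $L_s(C)$ by $g(a_1 \cdots a_s) = g(a_1)\cdots g(a_s)$. Neutrality of $\cons$ then reads $\cons_{g(r)} = g(\cons_r)$ for all $g, r$; neutrality of $d$ reads $d(g(E), g(E')) = d(E, E')$; and neutrality of $R$ reads $R(g(E)) = g(R(E))$.

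For the forward direction I would fix $g \in G$, an election $E$, and a ranking $r$, and compute
\begin{align*}
|g(r)|_{g(E)} &= \inf_{F \in \cons_{g(r)}} d(g(E), F) = \inf_{E' \in \cons_r} d(g(E), g(E')) = \inf_{E' \in \cons_r} d(E, E') = |r|_E,
\end{align*}
using neutrality of $\cons$ at the second equality (to replace $\cons_{g(r)}$ by $g(\cons_r)$ and reindex), and neutrality of $d$ at the third. Since $g$ acts bijectively on $L_s(C)$, this equality of scores transports the $\arg\min$ through $g$, yielding $R(g(E)) = g(R(E))$, i.e.\ $R$ is neutral.

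For the converse I would take $\cons = \cons^{\max}(R)$. Because $R(g(E)) = g(R(E))$, a ranking $r$ is the unique winner at $E$ if and only if $g(r)$ is the unique winner at $g(E)$; equivalently $\cons_{g(r)} = g(\cons_r)$, so $\cons^{\max}(R)$ is neutral. For $d$ I would choose the pseudometric in the second item of Example~\ref{eg:weird dist}, whose value at $(E, E')$ is determined purely by $R(E)$ and $R(E')$; neutrality of $R$ then forces $d(g(E), g(E')) = d(E, E')$, so $d$ is neutral. The construction in the proof of Proposition~\ref{prop:gen} already exhibits $R = \R(\cons, d)$ for this pair.

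The main thing to watch is bookkeeping: the $G$-action preserves the candidate set $C$ (it merely relabels elements of $C$), so the three infima in the display above really range over the same set $\cons_r$ after the change of variable $E' \mapsto g(E')$. Once that observation is recorded, both directions reduce to the short computations above and I do not expect any genuine obstacle; in particular, the $\noproof$ at the end of the statement is justified because the argument is essentially identical to that of Proposition~\ref{prop:reversal}, with $\sym(C)$ in place of the order-$2$ reversal group.
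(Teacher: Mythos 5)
Your argument is correct and is essentially the route the paper intends: Proposition~\ref{prop:neut} is stated with no proof precisely because it is the $\sym(C)$-equivariant analogue of Proposition~\ref{prop:DR invariant}, with the forward direction given by the same score computation and the converse by $\cons^{\max}(R)$ together with the second distance of Example~\ref{eg:weird dist}, exactly as you do. The only point worth recording is that your appeal to the construction of Proposition~\ref{prop:gen} tacitly uses the hypothesis that $R$ is distance rationalizable (so that, via Corollary~\ref{cor:gen}, the image of $R$ equals its unique image and condition (ii) of that proposition holds for $\cons^{\max}(R)$), which is available from the statement.
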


\subsection{Anonymity}
\label{ss:anon}

We now apply Proposition~\ref{prop:DR invariant} directly.
First we discuss the concept of anonymity. Several authors use a fixed finite 
voter set 
and define a rule to be anonymous if the rule is invariant under permutations 
of the set. This deals with the order of voters, but not their identities. 
On the other hand, allowing arbitrary identities leads us to issues of 
classes that are not sets, category theory, etc. Our convention that there
is a single countably infinite set of voters allows us to deal both with 
the order and identity of voters.

We start with an example that explains the need to distinguish 
$G$-equivariance from total compatibility.

\begin{eg} \label{eg:anon dist}
Consider $R:=\R(\wunam, d^1_H)$, plurality rule. The consensus $\wunam$
is anonymous, because identities and order of voters are not important. 
The distance $d^1_H$ is \emph{not}
totally anonymous, but it is anonymous. Consider the case $E = (C,V,\pi), E' = (C,V, \pi')$, 
where $C = \{a,b\}$, $V = \{v_1, v_2\}$,
$\pi = (ab, ba)$, $\pi'  = (ba, ab)$. Then $d(E, E') \neq 0 = d(E,E)$, although each of $E$ and $E'$ is obtained from the other by permutation of the
set of voters. 
\end{eg}

We can use the results of Section~\ref{ss:group} directly.

\begin{defn}
\label{def:anon2}
Let $G$ be the group of all bijections of $V^*$. For any set $X$, we define an action of $G$ on functions in the usual way by $g\cdot f (v) = f(g(v))$. 
In particular for each $C$ we can apply this to $X = L(C)$. This allows us to define an action on $\elecs$ via  $g\cdot (C, V, \pi):= (C, g(V), g\cdot \pi)$. Let $\sim$ be the 
equivalence relation induced by this action. A partial rule is  \hl{anonymous} if it is compatible with $\sim$. A distance is \hl{anonymous} if 
 it is $G$-equivariant, and  \hl{totally anonymous} if it is totally compatible with $\sim$.

We denote $\quot$ by $\votesits$ and call it the set of \hl{anonymous profiles} or \hl{voting situations}. 
\end{defn}

\begin{remark}
An anonymous profile is completely determined by the numbers of voters having each preference order, and hence is encoded by 
a \hl{multiset} $\nummap(E)$ on $L(C)$ of weight $|V|$ (we call $\nummap$ the \hl{vote number map} --- note that $E \sim E'$ if and only if $\nummap(E) = \nummap(E')$). 
A rule is anonymous if its output depends only on the anonymous profile, and not the particular voters or their order.
\end{remark}

\begin{eg}
\label{eg:anon}
Let $C=\{c_1, c_2\}$. Let $g$ be the bijection of $V^*$ that transposes $1$ and $2$. Let $E = (C, \{v_1\}, \{ba\}), E' =  (C, \{v_1, v_2\}, \{ba,ab\})$. Then $g(E)$ is an election in which $v_1$ votes $ab$ and $v_2$ votes $ba$, while in $g(E')$, $v_2$ votes $ab$ and $v_1$ does not vote.
\end{eg}

\subsubsection{(Totally) anonymous distances}
\label{sss:anon dist}

The next result is obvious, but useful. Part (ii) was observed by Elkind, Faliszewski and Slinko \cite[Proposition~1]{EFS2015}.

\begin{prop}
\label{prop:tot anon}
The following results hold.
\begin{enumerate}[(i)]
\item Every (reduced) tournament distance is totally anonymous.
\item A votewise distance is anonymous if and only if its underlying seminorm is symmetric.
\item A votewise distance based on a norm 
cannot be totally anonymous.
\end{enumerate}
\end{prop}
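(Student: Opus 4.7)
The plan is to handle the three parts in order; all three are quick applications of the definitions once the right observation is isolated in each case.

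For part (i), I will note that the pairwise counts $n_{ab}$ (and hence the weighted adjacency matrix $M(E)$) depend only on the multiset of rankings appearing in $\pi$, not on which voter casts which ranking nor on the order of voters. So if $E \sim F$ (anonymous equivalence) then $M(E) = M(F)$, and likewise $M(E') = M(F')$ whenever $E' \sim F'$. Consequently
$$ d^N(E,E') = N(M(E) - M(E')) = N(M(F) - M(F')) = d^N(F,F'),$$
which is exactly total anonymity. The same argument works for the reduced tournament distance, since applying the sign function componentwise to $M(E)$ still yields an object that depends only on the anonymous class of $E$.

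For part (ii), I will compute the effect of the $G$-action on the defining formula. If $g$ is a bijection of $V^*$ restricting to a bijection of $V$ with $|V|=n$, then by Definition~\ref{def:anon2} one has $(g\cdot \pi)_v = \pi_{g(v)}$ for each $v\in V$, and similarly for $\sigma$. Therefore
$$ d^N(g\cdot E, g\cdot E') = N\bigl(d(\pi_{g(v_1)}, \sigma_{g(v_1)}), \dots, d(\pi_{g(v_n)}, \sigma_{g(v_n)})\bigr),$$
which is $N$ applied to a permutation of the coordinates of $\bigl(d(\pi_{v_1}, \sigma_{v_1}), \dots, d(\pi_{v_n},\sigma_{v_n})\bigr)$. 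Demanding equality with $d^N(E,E')$ for every $g$ and every choice of coordinate values (which can be realized by choosing suitable $\pi, \sigma$ since $d$ on $L(C)$ takes arbitrarily many values once $|C|$ is large enough) is exactly the statement that $N_n$ is invariant under coordinate permutations, i.e. symmetric. The converse is immediate.

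For part (iii), I will produce the standard witness. Total anonymity would require $d^N(E, g\cdot E) = 0$ whenever $E \sim g\cdot E$. Take $V=\{v_1,v_2\}$, let $\pi$ send $v_1\mapsto \rho$ and $v_2\mapsto \rho'$ with $\rho\neq \rho'$ and $d(\rho,\rho')>0$ (this is available for all standard choices of $d$ on $L(C)$, and more generally whenever $d$ is not identically zero), and let $g$ transpose $v_1,v_2$. Then
$$ d^N(E, g\cdot E) = N_2\bigl(d(\rho,\rho'), d(\rho',\rho)\bigr),$$
and since $N$ is a norm, this vanishes only if both coordinates vanish, contradicting $d(\rho,\rho')>0$. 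Example~\ref{eg:anon dist} is precisely this witness for $d_H^1$. The main subtlety in writing this up is simply being explicit about the nondegeneracy assumption on the underlying ranking distance; I expect no real obstacle, as everything reduces to unpacking definitions.
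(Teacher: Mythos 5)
Your proof takes the same route as the paper's (whose proof is only three sentences long): (i) the pairwise-majority matrix depends only on how many voters hold each ranking, (ii) anonymity of $d^N$ amounts to invariance of $N_n$ under the coordinate permutations induced by the $G$-action of Definition~\ref{def:anon2}, and (iii) the two-voter transposition of Example~\ref{eg:anon dist} contradicts total anonymity when $N$ is a norm. One caution: your justification of the ``only if'' direction in (ii) by realizing arbitrary coordinate vectors is shakier than you suggest --- the candidate set is fixed in Definition~\ref{def:votewise} and an underlying distance such as $d_H$ only ever produces $\{0,1\}$-valued vectors, so strictly the argument yields permutation-invariance of $N_n$ on realizable vectors only (a gloss the paper also makes silently with ``follows immediately from the definitions''); similarly, your nondegeneracy proviso in (iii) (the underlying $d$ not identically zero) is left implicit in the paper's appeal to Example~\ref{eg:anon dist}, and the degenerate case could anyway be disposed of by comparing equivalent elections with different voter sets, where standardness forces an infinite value.
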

\begin{proof}
The first part is clear because the definition uses only the number of 
votes of each type. The second follows immediately from the definitions. For the third part, use the idea of Example~\ref{eg:anon dist}.
\end{proof}

\begin{eg}
\label{eg:votewise anon}
For an anonymous votewise distance, we can let $S=\{s_1^{n_1}, \dots, s_k^{n_k}\}$
denote the multiset of weight $n$ corresponding to the $n$-tuple $(a_1,
\dots, a_n)\in \reals^n$. We can define $N(S) = N_n(a)$, where $n =
\sum_i n_i s_i$ can be computed knowing only $S$.

For example, consider the $\ell^p$ norm for $1\leq p \leq \infty$ defined on $\reals^n$. This yields an anonymous 
votewise distance when coupled with any underlying distance $d$ on $L(C)$.
\end{eg}

\begin{eg} \label{eg:Hamming-quot}
Consider the Hamming distance $d:=d_H^1$. For each $x,y\in \votesits$,
the distance $\overline{d}(x, y)$ is the minimum number of voters whose
votes must be changed in order to transform $x$ into $y$. For example if
$x\in \votesits$ has $2$ $abc$ voters and $3$ $bac$ voters,
while $y$ has $2$ $bac$ voters and $3$ $cba$ voters, then
$\overline{d}(x,y) = 3$. Note that for
the Kemeny metric $d:=d_K^1$, $\overline{d}(x,y) = 8$. 
\end{eg}

\subsubsection{Anonymous DR rules}
\label{sss:anon rule}

Because of the special form of the equivalence relation for anonymity (it does not touch the candidate sets), a partial social rule on  $\votesits$ has a nice form. Indeed, many authors define voting rules directly on $\votesits$.
We simply define an anonymous rule in the DR framework by choosing a consensus notion $K$ on $\votesits$ and a distance $\delta$ on $\votesits$, and using the analogue of~\eqref{eq:argmin}.
Because $\votesits$ can be described as multisets which correspond geometrically to histograms, this may allow us to create interesting anonymous rules using geometric intuition.

The next characterization, which follows directly from Propositions~\ref{prop:DR invariant} and~\ref{prop:nice dist}, answers positively a question raised in \cite[p. 362, discussion after Prop. 4]{EFS2015}. 

\begin{prop}
\label{prop:anon}
If $\cons$ and $d$ are anonymous, then $\R(\cons, d)$ is anonymous and $\overline{R} = \R(\cons, \dt)$. Conversely if $R$ is anonymous and distance rationalizable, then $R = \R(\cons', d')$ where $\cons$ is anonymous and $d'$ is totally anonymous.
\noproof
\end{prop}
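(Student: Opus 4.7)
The plan is to derive both directions from the general machinery of Section~\ref{s:quot} applied to the group $G$ of all bijections of $V^*$, acting on $\elecs$ as in Definition~\ref{def:anon2}. Under this choice of $G$, the induced equivalence relation is precisely the relation $\sim$ defining anonymity, so unwrapping the definitions gives: anonymity of a partial rule is the same as $G$-invariance (equivalently, compatibility with $\sim$), anonymity of a distance is the same as $G$-equivariance, and total anonymity is the same as total compatibility with $\sim$.

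For the forward direction, I would apply Proposition~\ref{prop:DR invariant}, implication (ii)$\Rightarrow$(i), to the pair $(\cons, d)$: since $\cons$ is $G$-invariant and $d$ is $G$-equivariant, the rule $R := \R(\cons, d)$ is $G$-invariant, that is, anonymous. Proposition~\ref{prop:nice dist} then yields $\overline{R} = \R(\overline{\cons}, \dt)$. Because $\cons$ is $G$-invariant each consensus cell $\cons_r$ is a union of $\sim$-classes, so $\overline{\cons}$ is naturally identified with $\cons$ viewed as a partial rule on $\votesits$; under this identification the formula becomes $\overline{R} = \R(\cons, \dt)$ as stated.

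For the converse, I would appeal to Proposition~\ref{prop:DR compat}, implication (i)$\Rightarrow$(ii): an anonymous (i.e., compatible) distance-rationalizable rule $R$ admits a representation $R = \R(\cons', d')$ in which $\cons'$ is compatible and $d'$ is totally compatible with $\sim$, which translates directly into $\cons'$ being anonymous and $d'$ being totally anonymous.

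The only subtlety, and the reason one must invoke Proposition~\ref{prop:DR compat} rather than stop at Proposition~\ref{prop:DR invariant}, is that the latter only delivers an anonymous (i.e., $G$-equivariant) distance for the converse, while the statement demands a totally anonymous one. Getting \emph{total} anonymity forces a somewhat unnatural choice of distance (for instance, the second distance of Example~\ref{eg:weird dist}), which is unavoidable in general: by Proposition~\ref{prop:tot anon}(iii), no votewise distance built from a genuine norm is totally anonymous, so the totally anonymous $d'$ produced in the converse will generally differ from any natural votewise distance used in a forward construction of the same rule.
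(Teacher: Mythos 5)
Your proposal is correct and takes essentially the same route as the paper, which states that the proposition follows directly from Propositions~\ref{prop:DR invariant} and~\ref{prop:nice dist}, with the converse resting on Proposition~\ref{prop:DR compat} exactly as you argue (that is how the paper itself proves the equivalence of conditions (i) and (ii) in Proposition~\ref{prop:DR invariant}). The only slip is cosmetic: for the forward direction you should invoke implication (iii)$\Rightarrow$(i) of Proposition~\ref{prop:DR invariant} (the case where $d$ is $G$-equivariant) rather than (ii)$\Rightarrow$(i), but since all four conditions there are equivalent this is immaterial.
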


This applies to all consensuses described so far, and to all votewise distances based on symmetric seminorms, in addition to tournament distances. Thus all rules in Table~\ref{t:DR egs} are anonymous.

Note that elements of $\votesits$ can be encoded by multisets which are essentially histograms. A standard measure of distance between histograms is the \hl{Earth Mover} or \emph{transportation} distance. The interpretation in our situation, when $d$ is anonymous 
and standard, is that we must move voter mass between types of voters while incurring 
the minimum cost (distance). In fact in this case $\dt$ is exactly the Earth Mover distance based on $d$. Computing it is a special case of the \emph{linear assignment problem} of operations research. The minimum can be computed in polynomial time  via the ``Hungarian method" \cite{Munk1957}. An equivalent formulation of the problem is to find a minimum weight matching in a bipartite graph.

\section{Homogeneity}
\label{s:homog}

In this case we use a slightly different equivalence relation.

\begin{defn}
\label{def:distmap}
Let $E = (C,V, \pi)$ be an election, where $n=|V|$. 
The \hl{vote distribution} associated to $E$ is the probability distribution on $L(C)$ induced by 
the multiset $\nummap(E)$, which we denote $\distmap(E)$. The vote distribution map defines an equivalence relation $\sim$ on $\elecs$ in the usual way. We denote the 
quotient space by $\votedists$. 
\end{defn}

\begin{defn}
\label{def:homog}
A rule is \hl{homogeneous} if and only if it is compatible with $\sim$. A  distance is called \hl{totally homogeneous} if it is totally compatible with $\sim$.
\end{defn}

\begin{remark}
In other words, for a homogeneous rule, the set of winners depends only on the probability 
distribution of voter types --- cloning each voter the same number of times makes no difference to the result.  
Our definition of homogeneity implies anonymity, because the equivalence relation used in 
this section refines the one used for anonymity. Some authors do not make it clear whether they consider homogeneous rules to be anonymous, because they give a definition in terms of profiles in which the cloned voters occupy a particular position. Of course, the two definitions are the same in the presence of anonymity.

It is important to note that $\sim$ is not induced by a group action. Rather, there is a monoid (a ``group without inverses") acting. On $\votesits$ there is an action of the positive integers under multiplication 
where for each $x\in \votesits$, $k\cdot x$ is the voting situation formed by adding $k-1$ copies of each voter.  A rule is homogeneous if it is anonymous and invariant under the action of this monoid. Thus, for example, starting with an election $E$ and doubling or tripling the number of voters will lead to equivalent elections $2E, 3E$, but there is not 
necessarily any way to get from  $3E$ to $2E$ via an element of the monoid, because of the lack of inverses. This has important consequences, as we now see.
\end{remark}

The above remark shows that Proposition~\ref{prop:DR invariant} does not necessarily apply to homogeneity. In fact the conclusion is known to be false.

\begin{eg}
\label{eg:dodgson}
Consider $\cons = \cond$ and $d = d^1_K$. The rule $\R(\cons, d)$ is known as \hl{Dodgson's rule} 
and is known not to be homogeneous, although it is anonymous. For example, consider the following example of Fishburn \cite{Fish1977} with $C=\{a_1, \dots, a_7, x\}$. We start with $a_1\dots a_7$, and consider all its $7$ cyclic permutations. We then insert $x$ between the $4$th and $5$th entries in each case, so $x$ is always in $5$th position.
Then $d^1_K(E, \cond_x) = 7$ ($x$ must switch past each $a_i$ exactly once) but $d^1_K(E, \cond_{a_i}) = 6$ for each $i$, because, for example, $x$ must switch past $a_7, a_6, a_5$ respectively $3,2,1$ times.

However, let $k\geq 1$ and consider the  election $kE$. Then $k^{-1}d^1_K(kE, \cond_x) \to 3.5$ as $k\to \infty$ (because we need only just over $1/2$ a switch per $a_i$), while $k^{-1}d^1_K(kE, \cond_{a_i}) \to 4.5$ (because, for example, $a_1$ must switch past $a_7, a_6, a_5$ respectively just over  $2.5,1.5,0.5$ times).

In the analogue of the proof of Proposition~\ref{prop:DR invariant}, we can conclude only that $kE$
minimizes the distance to elements of $D(\cons)$ of the form $kE'$, but not to all of $D(\cons)$.
\end{eg}

\begin{remark}
The same example shows that $\R(\cond, d^1_H)$, the Voter Replacement Rule, is not homogeneous. In this case the limiting distances to $\cond_x$ and $\cond_{a_i}$ are $1.75$ and $1.5$, while the distances for the original $E$ are both equal to $2$. 
\end{remark}

In order to prove a result similar to Proposition~\ref{prop:DR invariant}, we need a strong condition on $\cons$.

We call an anonymous consensus \hl{divisible} if every element of $\cons_r$ with $kn$ voters has the 
form $kE$ where $E$ has $n$ voters. This is a very strong condition --- taking $n=1$ shows that $\cons$
 is extended by $\sunam$ (up to possible permutation of the winners).

We now generalize \cite[Thm 8]{EFS2015}, which dealt with the case where 
$\cons = \sunam$ and $d$ is  $\ell^p$-votewise, based on an underlying pseudometric.

\begin{defn}
\label{def:homog dist}
An anonymous  distance on $\elecs$ is \hl{homogeneous} if for each $k\geq
1$ and each $E,E'\in \elecs$, 
$$d(E, E') = d(kE, kE').$$ 
A family of symmetric seminorms $N$  is \hl{homogeneous} if $N_{nk}(x^{(k)})
= N_n(x)$ for all $x\in \reals^n$ and all $k\geq 1$. Here $x^{(k)}$
denotes the element of $\reals^{nk}$ obtained by concatenating $k$
copies of $x$.
\end{defn}

\begin{remark}
The reader should avoid confusion by noting 
that the term \emph{homogeneous} is often used for the different property of a seminorm 
expressed by the identity $N(\lambda x) = |\lambda| N(x)$.
\end{remark}

\begin{remark}
Let $d$ be a standard distance and $N$ a symmetric seminorm. Then $d^N$ can be normalized to be homogeneous. Explicitly, let $d_*^N(E, E') = n^{-1} d^N(E, E')$ where $E = (C,V, \pi)$ and $|V| = n$. The DR rules defined by $d^N$ and $d_*^N$ are the same, since we are only scaling the distance by a constant factor.
\end{remark}

\begin{prop}
\label{prop:divis}
Let $\cons$ be a homogeneous divisible consensus and $d$ a homogeneous distance. Then 
$\R(\cons, d)$ is homogeneous.
\end{prop}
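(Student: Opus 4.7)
The plan is to first reduce the problem to showing $R(kE) = R(E)$ for every $E \in \elecs$ and every integer $k \geq 1$, where $R := \R(\cons, d)$. By Proposition~\ref{prop:anon}, $R$ is anonymous, so for any pair $E \sim E'$ with voter counts $m, m'$ the clones $m'E$ and $mE'$ have identical vote-type counts (each type occurring $mm'$ times its common distributional probability), hence are anonymously equivalent, and so $R(m'E) = R(mE')$. Combined with the scaling identity $R(kE) = R(E)$ applied at $k=m'$ and $k=m$, this yields $R(E) = R(m'E) = R(mE') = R(E')$.

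For the scaling identity, fix $E$ with $m$ voters and $k \geq 1$, and aim to show $|r|_{kE} = |r|_E$ for each ranking $r$ (writing $|r|_E$ for the $(\cons,d,E)$-score of $r$); since the argmins then coincide, $R(kE) = R(E)$. For the upper bound $|r|_{kE} \leq |r|_E$, take any $F \in \cons_r$: homogeneity of $\cons$ (compatibility with the cloning equivalence) gives $kF \in \cons_r$, and homogeneity of $d$ gives $d(kE, kF) = d(E, F)$. Hence $|r|_{kE} \leq d(kE, kF) = d(E, F)$, and taking the infimum over $F \in \cons_r$ produces the bound.

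For the reverse inequality $|r|_{kE} \geq |r|_E$, take $F' \in \cons_r$. When $|V(F')| = km$, divisibility of $\cons$ provides $F' = kF$ for some $F$ with $m$ voters, homogeneity of $\cons$ gives $F \in \cons_r$, and homogeneity of $d$ gives $d(kE, F') = d(kE, kF) = d(E, F) \geq |r|_E$. The main obstacle is handling $F' \in \cons_r$ with $|V(F')| \neq km$, because divisibility does not directly let us ``uncloneº'' $F'$ into an $m$-voter relative of $E$. The natural resolution is to invoke that $d$ is standard, so that $d(kE, F') = \infty$ for such $F'$ and they cannot affect the infimum; this is the implicit setting, since by the remark following Definition~\ref{def:homog dist}, the canonical way to produce a homogeneous distance is to normalize a standard votewise distance, which preserves standardness. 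With this, $|r|_E \leq d(kE, F')$ for every $F' \in \cons_r$, and taking the infimum yields $|r|_E \leq |r|_{kE}$, completing the proof.
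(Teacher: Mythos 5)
Your proof is correct and is essentially the paper's own argument: the paper's one-line proof (``adapt Proposition~\ref{prop:DR invariant}'') amounts to exactly your two inequalities --- cloning a consensus minimizer using homogeneity of $\cons$ and $d$ for one direction, and divisibility plus homogeneity of $\cons$ for the other --- combined with the reduction, via anonymity, from general equivalent elections to the scaling identity $R(kE)=R(E)$, which the paper's preceding remark on the monoid action treats as understood. Your explicit appeal to standardness to discard consensus elections whose voter count is not $km$ is not a defect relative to the paper: the paper's sketch carries the same implicit restriction, and all its intended instances (votewise $d^N$ and the normalized $d_*^N$) are standard, so you have merely made visible an imprecision in the statement rather than introduced a gap.
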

\begin{proof}
The proof of Proposition~\ref{prop:DR invariant} adapts directly to this case, as described above.
\end{proof}

Thus we recapture the well-known fact that Kemeny's rule $\R(\sunam, d_K^1)$ is homogeneous. Proposition~\ref{prop:divis}  shows, for example, that 
although Dodgson's
rule can be rationalized with respect to $\sunam$ and some distance (since Dodgson's rule satisfies the unanimity axiom), 
no such distance can be homogeneous.

\section{The Votewise Minimizer Property}
\label{s:VMP}

The failure of $\R(\cons, d)$ to inherit various conditions from $\cons$ is related to the fact that minimization does not respect various operations. Roughly speaking, votewise distances combine better with votewise consensuses. We now make some technical (and rather strong) definitions that allow for several positive results when dealing with votewise distances.

\begin{defn}
\label{def:CMP}
Let $\cons$ be a compatible consensus and $d$ a compatible distance. Say that $(\cons, d)$ has
the \hl{compatible minimizer property} (CMP) if for each $E, E'\in \elecs$ with $E\sim E'$ and each $r$, 
$d(E, \cons_r) = d(E', \cons_r)$.

\begin{remark}
If $\sim$ is induced by a group action then the CMP is automatically satisfied, as used in the proof of Proposition~\ref{prop:DR invariant}. 
The analogue of Proposition~\ref{prop:DR invariant} does not hold for general equivalence relations, as we see in Example~\ref{eg:dodgson}. However, with the additional assumption of the CMP, everything works well. 
\end{remark}

\begin{prop}
\label{prop:CMP}
Let $\cons$ be a compatible consensus and $d$ a compatible distance, and suppose that $(\cons, d)$ satisfies the CMP. Then $\R(\cons, d)$ is compatible.
\end{prop}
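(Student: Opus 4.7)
The plan is to unfold the definitions of compatibility and distance rationalization and observe that the CMP gives exactly what is needed. Compatibility of $R = \R(\cons,d)$ means that whenever $E \sim E'$, we have $R(E) = R(E')$. By the definition of $\R(\cons,d)$, this amounts to showing that the argmin sets
\[
\arg\min_{r} \inf_{F \in \cons_r} d(E, F) \quad \text{and} \quad \arg\min_{r} \inf_{F \in \cons_r} d(E', F)
\]
coincide.

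Fix $E \sim E'$. For each $s$-ranking $r$, the $(\cons, d, E)$-score of $r$ is $|r|_E := d(E, \cons_r)$ and similarly $|r|_{E'} := d(E', \cons_r)$. The CMP hypothesis says precisely that $|r|_E = |r|_{E'}$ for every $r$. Since two functions on $L_s(C^*)$ that agree pointwise have identical argmin sets, this gives $R(E) = R(E')$, establishing compatibility.

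There is essentially no obstacle here: the CMP has been defined so as to abstract exactly the property of the group-action setting (Proposition~\ref{prop:DR invariant}) that made the argument work, namely that equivalent elections have the same distance to each consensus class. The role of compatibility of $\cons$ in the statement is to ensure that $\overline{R}$ will subsequently be well-defined on the quotient, but for the assertion that $R$ itself is compatible, only the equality of scores across the equivalence class is needed, which is precisely the CMP. The compatibility of $d$ is not even used in the argument for $R$ being compatible, though it plays a role in the broader context (e.g.\ in later deducing that $\overline{R} = \R(\overline{\cons}, \overline{d})$, in analogy with Proposition~\ref{prop:simple compat}).
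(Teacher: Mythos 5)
Your argument is correct and coincides with the paper's own proof: both simply observe that the CMP forces $d(E,\cons_r)=d(E',\cons_r)$ for all $r$ whenever $E\sim E'$, so the argmin sets defining $\R(\cons,d)(E)$ and $\R(\cons,d)(E')$ agree. Your added remarks on which hypotheses are actually used are accurate but not needed for the statement itself.
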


\begin{proof}
Let $E, E'\in \elecs$ with $E\sim E'$. By CMP, $d(E, \cons_r) = d(E', \cons_r)$ for all $r$ and in particular the minimizing values of $r$ are the same. 
\end{proof}
\end{defn}

\begin{eg}
\label{eg:cond homog}
If $d$ is totally compatible then the CMP is automatically satisfied. Thus, for example, every rule $\R(\cons, d)$, where $d$ is a tournament distance and $\cons$ is anonymous and homogeneous, is anonymous and homogeneous.
\end{eg}

\begin{eg}
\label{eg:VMP no}
Consider the
election $E=(C,V,\pi)$ where $C=\{a,b\}$, $V$ has size $5$, and $\pi =
\{ab,ab,ba,ba,ba\}$. Then $d(E, \cond_a) = 1$ for $d\in \{d_H, d_K\}$,
and every minimizer differs from $\pi$ only in that precisely one of the
$ba$ voters switches to $ab$. However, if we consider $3E$ then each minimizer requires not $3$, but $2$ switches. Thus $(\cond, d)$ 
does not satisfy the CMP with respect to the equivalence relation used to define homogeneity.
This also shows that
$(\maj,d)$  need not satisfy the CMP, because $\maj$
coincides with $\cond$ when $m=2$.

Thus we should not necessarily expect Dodgson's rule or the Voter Replacement Rule to be homogeneous, and indeed they are not, as Example~\ref{eg:dodgson} shows.
\end{eg}

\begin{defn}
\label{definition:vote min} 
Suppose that $d$ is votewise and anonymous, and $\cons$ is anonymous.
Say that $(\cons, d)$ satisfies the \hl{votewise minimizer property} (VMP) if the following condition is satisfied.
\begin{quotation}
For each  $r\in L_s(C)$ and each election $E = (C,V,\pi)\in \elecs$, there
exists a minimizer $(C,V,\pi^*)\in \cons_r$ of the distance from $E$ to
$\cons_r$, so that for all $i, d(\pi_i,\pi^*_i)$ depends only on $\pi_i$
and $r$.
\end{quotation}
\end{defn} 

\begin{prop}
\label{prop:VMP char}
If $(\cons, d)$ satisfies the VMP, then 
\begin{itemize}
\item $d(\pi_i, \pi_i^*)$ has the form $\delta(t,\rho)$ for some function $\delta$, where $t,\rho \in L(C)$;
\item $d(E, \cons_r)$ has the form $N(S)$ where $S$ is the
multiset of all values of $\delta(t, \rho)$ counted with multiplicity.
\end{itemize}
\end{prop}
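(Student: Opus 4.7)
The plan is to unpack the VMP definition and combine it with the votewise and anonymous properties of $d$; both bullets are essentially direct reformulations of the hypothesis.

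For the first bullet, the VMP clause ``$d(\pi_i,\pi^*_i)$ depends only on $\pi_i$ and $r$'' is itself the assertion that there is a function $\delta$ with $d(\pi_i,\pi^*_i)=\delta(\pi_i,r)$. I would make this precise by defining, for each $t\in L(C)$ and each $r$, the value $\delta(t,r)$ to be $d(t,\pi^*_i)$ computed in any election $E=(C,V,\pi)$ containing a voter $i$ with $\pi_i=t$, using the minimizer $\pi^*$ guaranteed by VMP. Well-definedness across elections and across voters with the same preference is exactly what the ``depends only on $\pi_i$ and $r$'' clause provides.

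For the second bullet, since $d$ is votewise and $\pi^*$ is a minimizer of the distance from $E$ to $\cons_r$ in $\cons_r$, we have
\[
d(E,\cons_r)=d(E,E^*)=N_n\bigl(d(\pi_1,\pi^*_1),\dots,d(\pi_n,\pi^*_n)\bigr)=N_n\bigl(\delta(\pi_1,r),\dots,\delta(\pi_n,r)\bigr),
\]
where $n=|V|$ and the last equality uses the first bullet. Because $d$ is anonymous, Proposition~\ref{prop:tot anon}(ii) implies that $N_n$ is symmetric, so its value is a function only of the multiset of arguments. Letting $S$ be the multiset $\{\delta(\pi_i,r):1\le i\le n\}$ (with multiplicity) and writing $N(S)$ for $N_n$ applied to any ordering of $S$, we get $d(E,\cons_r)=N(S)$.

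There is no real obstacle here beyond making the global well-definedness of $\delta$ explicit; the only substantive ingredient beyond VMP itself is the fact that anonymity of a votewise distance forces the underlying seminorm to be symmetric, which is already recorded earlier in the paper. Hence the proof is essentially a matter of carefully writing down what VMP, votewise-ness, and anonymity say in combination.
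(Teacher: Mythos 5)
Your proposal is correct and is exactly the argument the paper intends: its proof is the one-line remark that the claim ``follows directly from the definitions,'' and your write-up simply makes that explicit by extracting $\delta$ from the VMP clause and using votewise-ness together with the symmetry of the seminorm (Proposition~\ref{prop:tot anon}(ii)) to pass to the multiset formulation. No further comment is needed.
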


\begin{proof}
This follows directly from the definitions. 
\end{proof}

\begin{eg}
\label{eg:VMP formula}
Let $\cons = \wunam$ and $d = d_K$, and $N = \ell^2$. For each $E=(C,V,\pi)\in \elecs$ and $a\in C$, $d(E, \wunam_a) = N(d(\pi_1, \pi_1^*), \dots , d(\pi_n, \pi_n^*))$.
We can take $\pi^*$ to be the ranking derived from $\pi$ by swapping $a$ to the top. Thus 
$d(E, \wunam_a)^2 = \sum_{t\in L(C)} n(t) r(t,a)^2$, where $n(t)$ is the number of times $t$ occurs in $\pi$.

\end{eg}

\begin{prop}
\label{prop:VMPimpCMP}
Let $\cons$ be an anonymous consensus and $d$ a votewise anonymous and homogeneous distance.
If $(\cons, d)$ satisfies the VMP, then it satisfies the CMP with respect to the equivalence relation defining homogeneity.
\end{prop}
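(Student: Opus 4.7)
The plan is to reduce CMP (under the homogeneity relation) first to the special case where $E' = kE$ for some positive integer $k$, and then to compute both sides of the desired equality using the VMP formula in Proposition~\ref{prop:VMP char} together with the homogeneity of the underlying seminorm family of $d$.

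For the reduction, I would take $E \sim E'$ with $\distmap(E) = \distmap(E')$, set $n = |V(E)|$, $n' = |V(E')|$, and let $L = \mathrm{lcm}(n, n')$, $a = L/n$, $b = L/n'$. Since the multiplicity functions $f, f' : L(C) \to \nats$ of $E$ and $E'$ satisfy $f/n = f'/n'$, we have $af = bf'$, so $aE$ and $bE'$ share both voter count $L$ and multiplicity function and hence lie in the same anonymity orbit. Invariance of $\cons_r$ under the anonymity group action together with anonymity of $d$ then give $d(aE, \cons_r) = d(bE', \cons_r)$, so it suffices to show that $d(E, \cons_r) = d(kE, \cons_r)$ for every $k \geq 1$, every $E$, and every $r$.

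For this reduced claim, I would apply Proposition~\ref{prop:VMP char} twice. Writing $f$ for the multiplicity function of $E$, the proposition gives $d(E, \cons_r) = N_n(S)$, where $S$ is the multiset containing each value $\delta(\rho, r)$ with multiplicity $f(\rho)$ for $\rho \in L(C)$. Applied to $kE$, whose multiplicity function is $kf$, the same global function $\delta$ is used, yielding $d(kE, \cons_r) = N_{kn}(S^{(k)})$, where $S^{(k)}$ has each value $\delta(\rho, r)$ with multiplicity $kf(\rho)$. Because $d$ is a homogeneous votewise distance, the underlying seminorm family satisfies $N_{kn}(x^{(k)}) = N_n(x)$ (Definition~\ref{def:homog dist}), so $N_{kn}(S^{(k)}) = N_n(S)$, completing the argument.

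The crux, and the only place where something subtle is happening, is the global character of the function $\delta$ in Proposition~\ref{prop:VMP char}: the per-voter contribution $\delta(\rho, r)$ depends only on the ranking $\rho$ and the consensus choice $r$, not on the ambient election. Once this is in hand, the formulas for $d(E, \cons_r)$ and $d(kE, \cons_r)$ are built from the same values $\delta(\rho, r)$ and differ only in multiplicities, at which point homogeneity of $N$ is exactly what is needed to collapse $N_{kn}(S^{(k)})$ back to $N_n(S)$.
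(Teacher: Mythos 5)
Your proof is correct and follows essentially the same route as the paper's (much terser) argument: the VMP formula $d(E,\cons_r)=N(S)$ with a global per-vote function $\delta(\rho,r)$ makes the distance to each $\cons_r$ depend only on the multiset of votes, and homogeneity then collapses $N_{kn}(S^{(k)})$ to $N_n(S)$, with your lcm reduction simply spelling out the step the paper summarizes as ``by homogeneity of $d$ it depends only on the homogeneity class.'' The only cosmetic imprecision is your claim that homogeneity of the votewise distance yields $N_{kn}(x^{(k)})=N_n(x)$ for all $x$; strictly it gives this only for vectors realizable as per-vote distance vectors, but the multiset $S$ is of that form, so the argument stands.
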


\begin{proof}
For each $E$ and $r$, there is a minimizer of $d(E, \cons_r)$ for which the distance has the form $N(S)$ where $S$ is the multiset of values of $d(\pi_i, \pi'_i)$ occurring. Thus it depends only on the equivalence class with respect to anonymity. By homogeneity of $d$, it in fact only depends on the equivalence class of with respect to homogeneity.
\end{proof}

Example~\ref{eg:VMP no} shows that the VMP is not always satisfied, and Proposition~\ref{prop:VMP} gives sufficient conditions for it to be satisfied.

\begin{prop}
\label{prop:VMP}
Let $d$ be an anonymous votewise distance on $\elecs$.
Suppose that the $s$-consensus $\cons$ satisfies the following: for each $r\in L_s(C)$, there is a nonempty subset $S_r$ of $L(C)$ such that
$\cons_r$ consists precisely of elections for which no voter has a ranking in $S_r$.
Then $(\cons, d)$ satisfies the VMP.
\end{prop}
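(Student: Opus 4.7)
The plan is to build the claimed minimizer $\pi^*$ coordinate by coordinate, using the ``forbidden rankings'' description of $\cons_r$. Given $r\in L_s(C)$ with its associated set $S_r\subseteq L(C)$, and an election $E=(C,V,\pi)$, I define $\pi_i^*$ for each voter $i$ by the rule: if $\pi_i\notin S_r$, set $\pi_i^* := \pi_i$; if $\pi_i\in S_r$, let $\pi_i^*$ be some ranking in $L(C)\setminus S_r$ minimizing $\rho\mapsto d(\pi_i,\rho)$ (tied minimizers are broken by a fixed rule, e.g.\ lexicographic order). In either case the value $d(\pi_i,\pi_i^*)$ depends only on $\pi_i$ and on the set $S_r$, hence only on $\pi_i$ and $r$; this defines the function $\delta(\pi_i,r)$ required by the VMP, and gives the representation in Proposition~\ref{prop:VMP char}.

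The first step is to check that $E^* := (C,V,\pi^*)$ belongs to $\cons_r$. This is immediate: by construction no coordinate of $\pi^*$ lies in $S_r$, which is precisely the condition defining $\cons_r$.

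The second step is to verify that $E^*$ is a minimizer of $d(E,\cdot)$ over $\cons_r$. Fix any $\sigma = (C,V,\sigma)\in\cons_r$. Then $\sigma_i\notin S_r$ for every $i$, so the choice of $\pi_i^*$ ensures $d(\pi_i,\pi_i^*)\leq d(\pi_i,\sigma_i)$ coordinatewise (trivially when $\pi_i\notin S_r$, by minimality when $\pi_i\in S_r$). Applying the symmetric seminorm $N$ underlying the anonymous votewise distance to these two nonnegative vectors, and using monotonicity of $N$ on the nonnegative orthant (which holds for all $\ell^p$ norms and other seminorms used in this paper), we conclude $d^N(E,E^*)\leq d^N(E,\sigma)$. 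Since $\sigma$ was arbitrary, $E^*$ is indeed a minimizer, establishing the VMP.

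The only subtle point is the appeal to monotonicity of $N$ in the last step, since the general seminorm axioms of Definition~\ref{def:norm} alone do not force monotonicity on $\reals_+^n$. The natural reading is to work with a monotone symmetric seminorm, which covers all examples in the paper; the main obstacle is therefore not the combinatorics of $\pi^*$ but verifying that the vector inequality at the voter level lifts to an inequality of aggregated distances.
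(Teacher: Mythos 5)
Your construction---keep $\pi_i$ when $\pi_i\notin S_r$ and otherwise replace it by a nearest element of $L(C)\setminus S_r$---is exactly the paper's own (one-line) proof, so the approach is the same. The monotonicity caveat you flag is genuine but applies equally to the paper's argument: lifting the coordinatewise inequality $d(\pi_i,\pi_i^*)\leq d(\pi_i,\sigma_i)$ to $d^N(E,E^*)\leq d^N(E,\sigma)$ does require $N$ to be monotone on the nonnegative orthant (true for the $\ell^p$ norms and the examples used), a hypothesis the paper leaves implicit here and only states explicitly later, in Proposition~\ref{prop:homog VMP}.
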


\begin{proof}
The minimizer in question is obtained by, for each $i$,
choosing the closest element of $L(C)$ under the underlying distance. 
\end{proof}

\begin{eg}
\label{eg:VMP yes}
$(\sunam^s, d)$ satisfies the VMP for each $s$, because we can take $S_r$ to be the set of rankings which do not agree with $r$ in all of their top $s$ places. Any consensus which $\sunam^s$ extends also satisfies VMP. For example, we can choose one fixed ranking that does agree with $r$ in the top $s$ places, and define $S_r$ to be its complement. Note that this example is not neutral.

\end{eg}

\subsection{Homogeneity}
\label{ss:VMP homog}

So far we can only show homogeneity when using $\sunam$. We want to widen this to at least $\wunam$. We use a definition from Elkind, Faliszewski and Slinko \cite{EFS2015}.

\begin{defn}
We call a seminorm $N$ \hl{monotone in the positive orthant} if whenever 
$0\leq x_i \leq y_i$ for all $i$, $N(x) \leq N(y)$. 
\end{defn}

\begin{prop}
\label{prop:homog VMP} Suppose that $\cons$ is homogeneous, $d^N$ is votewise, anonymous and homogeneous, $(\cons, d^N)$ satisfies the VMP, and $N$ is monotone in the positive orthant. 
Then $\R(\cons, d^N)$ is homogeneous. 
\end{prop}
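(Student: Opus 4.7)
The quickest route is to invoke Propositions~\ref{prop:VMPimpCMP} and~\ref{prop:CMP} with $\sim$ taken to be the homogeneity equivalence relation of Definition~\ref{def:distmap}. All the hypotheses of Proposition~\ref{prop:VMPimpCMP} are in force: $\cons$ is homogeneous and hence anonymous by the remark following Definition~\ref{def:homog}; $d^N$ is votewise, anonymous and homogeneous by hypothesis; and $(\cons,d^N)$ satisfies the VMP. Therefore $(\cons,d^N)$ satisfies the CMP with respect to~$\sim$. Proposition~\ref{prop:CMP}, whose consensus hypothesis is met by homogeneity of~$\cons$, then concludes that $\R(\cons,d^N)$ is compatible with~$\sim$, which is by Definition~\ref{def:homog} the statement that $\R(\cons,d^N)$ is homogeneous.

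To make the role of the hypotheses transparent, here is the direct computation behind the CMP step. Fix an election $E=(C,V,\pi)$ with $|V|=n$, and write $n(t)$ for the number of voters in $E$ of type $t\in L(C)$. By the VMP together with Proposition~\ref{prop:VMP char}, there is a function $\delta$, independent of the election, such that for every consensus choice~$r$
\[
d^N(E,\cons_r) \;=\; N_n\bigl(\{\delta(t,r)^{n(t)}\}_{t\in L(C)}\bigr).
\]
Applying the same formula to~$kE$, whose voter counts are $kn(t)$, gives $d^N(kE,\cons_r)=N_{kn}(\{\delta(t,r)^{kn(t)}\}_{t})$. The multiset inside the latter seminorm is precisely $k$ concatenated copies of the multiset inside the former, so the homogeneity of the seminorm family~$N$, namely $N_{kn}(x^{(k)})=N_n(x)$, collapses the two values to equality. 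Since this holds for every~$r$, the arg-min over~$r$ is unchanged, whence $\R(\cons,d^N)(kE)=\R(\cons,d^N)(E)$.

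The genuinely substantive point in this chain is the \emph{universality} of~$\delta$ across elections (Proposition~\ref{prop:VMP char}); once that is in hand the rest is a one-line calculation. The hypothesis that~$N$ is monotone in the positive orthant is not consumed by the route above; it would instead come into play if one wished to bypass the universal~$\delta$ and argue via a non-VMP competitor~$\sigma^{\ast}$ for~$d^N(kE,\cons_r)$, comparing its vector of per-voter distances (suitably reordered) against the $kn$-vector obtained by padding the VMP minimiser for~$E$ by zeros, and then using componentwise monotonicity to deduce $d^N(kE,\cons_r)\ge d^N(E,\cons_r)$.
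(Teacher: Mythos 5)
Your endgame — establish the CMP for the homogeneity relation and then quote Proposition~\ref{prop:CMP} — is exactly the paper's, but the way you establish the CMP has a genuine gap. Your ``direct computation'' assumes that the per-voter distance function $\delta(t,r)$ furnished by the VMP is the same for $E$ and for $kE$, so that the minimizer for $kE$ realizes precisely the multiset $S^{(k)}$, where $S$ is the multiset realized for $E$. Definition~\ref{definition:vote min} only asserts, for each fixed election, the existence of a minimizer whose per-voter distances are a function of $\pi_i$ and $r$; it does not make that function independent of the election, and Proposition~\ref{prop:VMP char} should be read per-election as well. The paper's proof of Proposition~\ref{prop:homog VMP} is explicit about this (``the minimizer may depend on $k$, so $\pi^{**}$ may not equal $\pi^*$''), and that is precisely why monotonicity in the positive orthant appears among the hypotheses: the paper writes $d(E,\cons_r)=N_n(S)$ and $d(kE,\cons_r)=N_{kn}\bigl((S')^{(k)}\bigr)$ with possibly different per-type multisets $S,S'$, obtains $d(kE,\cons_r)\le d(E,\cons_r)$ from homogeneity of $\cons$ and of $d$, and obtains the converse by comparing $S'$ with $S$ elementwise (using minimality of $\pi^*$) and then computing $N_{kn}\bigl((S')^{(k)}\bigr)=N_n(S')\ge N_n(S)$ by homogeneity and monotonicity of $N$. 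The fact that your route never consumes monotonicity — which you yourself notice — is the tell-tale sign that you have read the VMP more strongly than intended; under that strong reading the proposition (and the result of Elkind--Faliszewski--Slinko it generalizes) would be trivial. For the same reason, leaning on Proposition~\ref{prop:VMPimpCMP} does not help: its conclusion is exactly the CMP statement at issue here, so citing it merely outsources the contested step rather than proving it, and the only justification you offer for it is the same election-independent $\delta$.

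Your sketched fallback does not repair this. Zero-padding the distance vector of the VMP minimizer for $E$ to length $kn$ is not compatible with homogeneity of the seminorm family: homogeneity says $N_{kn}(x^{(k)})=N_n(x)$, not $N_{kn}(x,0,\dots,0)=N_n(x)$ (the normalized family $N_n(x)=n^{-1}\sum_i|x_i|$ is symmetric, homogeneous and monotone in the positive orthant, yet changes under zero padding), and the claimed componentwise domination of the padded vector by an arbitrary competitor's per-voter distances is itself unproven — it would again require per-type minimality of the VMP distances, which the definition does not give. The correct repair is the paper's: compare the per-type multisets $S$ and $S'$ of the two VMP minimizers, replicate (rather than pad) so that homogeneity of $N$ applies, and then invoke monotonicity.
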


\begin{proof}
Let $E\in \votesits$ and $k\geq 1$. First note that $d(E, \cons_r) = d(kE, k\cons_r) \geq d(kE, \cons_r)$. We now prove the converse inequality. 

By VMP, $d(E, \cons_r) = N(S)$ where $S$ is the multiset of values $d(\pi, \pi^*)$. Also by VMP and homogeneity, $d(kE, \cons_{r}) = N(kS')$ where $S'$ is the multiset of values $d(\pi, \pi^{**})$ (here the minimizer may depend on $k$, so $\pi^{**}$ may not equal $\pi^*$). Note that $S'$ is elementwise at least as great as $S$, because $\pi^*$ is a minimizer. By homogeneity and monotonicity in the positive orthant, $d(kE, \cons_{r}) = N(S') \geq N(S) = d(E, \cons_r)$, as required.

Proposition~\ref{prop:CMP} now gives the result.
\end{proof}

\begin{cor}
\label{cor:homog}
If $1\leq p \leq \infty$, then
$\R(\sunam^s, d^p)$ is homogeneous. 
\end{cor}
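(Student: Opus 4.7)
The plan is to apply Proposition~\ref{prop:homog VMP} to $\cons = \sunam^s$ with seminorm $N = \ell^p$. Since that proposition requires a distance that is homogeneous in the sense of Definition~\ref{def:homog dist}, and $d^p(kE, kE') = k^{1/p}\, d^p(E,E')$ fails this identity for $p \neq \infty$, I would first replace $d^p$ by a normalized version obtained by dividing by an appropriate power of the common voter count (specifically $n^{1/p}$ for $p<\infty$; no normalization is needed for $p=\infty$, since the maximum over $k$ copies of a vector equals the maximum over one copy). Standardness of $d^p$ guarantees that when we evaluate $d^p(E, \cons_r)$ across all $r$, the rescaling is by a single positive constant depending only on $|V(E)|$, so the argmin defining the rule is untouched. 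Hence it suffices to prove homogeneity of the rule induced by the normalized distance.

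It then remains to check the four hypotheses of Proposition~\ref{prop:homog VMP}. First, $\sunam^s$ is anonymous and homogeneous directly from its definition: the condition that every voter agrees on the top $s$ candidates is preserved and reflected under cloning each voter $k$ times. Second, the normalized distance is votewise and anonymous by Proposition~\ref{prop:tot anon}(ii) (the $\ell^p$-norm is symmetric), and homogeneous by construction. Third, the pair $(\sunam^s, d^p)$ satisfies the VMP: Example~\ref{eg:VMP yes} specialises Proposition~\ref{prop:VMP} by taking $S_r$ to be the (nonempty, assuming $|C|\geq 2$) set of rankings whose top $s$ entries disagree with $r$, and its complement is exactly the set of voter rankings permitted in $\sunam^s_r$. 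Fourth, the $\ell^p$-norm is monotone in the positive orthant: for $p<\infty$ this follows from termwise monotonicity of $t\mapsto t^p$ on $[0,\infty)$ followed by summation, and for $p=\infty$ it is immediate from the definition as a maximum.

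The only step requiring any thought is the initial normalization, which is forced by the standardness of $d^p$ and the stringent homogeneity demanded by Definition~\ref{def:homog dist}; the remaining hypotheses are built into the definitions of $\sunam^s$ and the $\ell^p$-norm. I do not anticipate any significant obstacle, and Proposition~\ref{prop:homog VMP} immediately yields the conclusion that $\R(\sunam^s, d^p)$ is homogeneous.
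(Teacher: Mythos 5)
Your proof is correct and follows essentially the route the paper intends: invoke Proposition~\ref{prop:homog VMP} with $\cons=\sunam^s$, using Example~\ref{eg:VMP yes} for the VMP, monotonicity of $\ell^p$ in the positive orthant, and a voter-count normalization (justified by standardness of $d^p$) to obtain the required homogeneity of the distance. Your explicit factor $n^{-1/p}$ is in fact the right one for general $p$, sharpening the paper's remark, which states the normalization as $n^{-1}$ (exact only for $p=1$, and unnecessary for $p=\infty$).
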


\begin{remark}
The case $p = \infty$ allows for  stronger results \cite[Thm 9]{EFS2015}.
\end{remark}

\subsection{Consistency}
\label{ss:consist}

Consistency, introduced by Young \cite{Youn1975}, deals with the effect of splitting the voter set into two parts.

\begin{defn}
\label{def:consist}
Let $E = (C, V, \pi)$ and $E' = (C, V', \pi') \in \elecs$ where $V \cap V' = \emptyset$. We define $E+E' = (C, V \cup V', \pi'')$, where 
$$
\pi''(v) = \begin{cases} \pi(v) \qquad \text{if $v \in V$;} \\ \pi'(v) \qquad \text{if $v \in V'$.} \end{cases}
$$
A partial social rule $R$ is \hl{consistent} if whenever 
$R(E) \cap R(E') \neq \emptyset$, necessarily $R(E) \cap R(E') = R(E+E')$.
\end{defn}

\begin{remark}
 A consensus is consistent if and only if each consensus set is closed under the $+$ operation.
\end{remark}

The next result generalizes Elkind, Faliszewski and Slinko \cite[Thm 7]{EFS2015}.
\begin{prop}
\label{prop:consist VMP}
Suppose that $\cons$ is consistent, $d$ is votewise with respect to a homogeneous norm and 
$(\cons,d)$ satisfies the VMP. Then $\R(\cons, d)$ is consistent.
\end{prop}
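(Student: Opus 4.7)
Let $E = (C, V, \pi)$, $E' = (C, V', \pi')$ with $V \cap V' = \emptyset$, set $n = |V|$, $n' = |V'|$, $F = E + E'$, and suppose $r^* \in R(E) \cap R(E')$. The goal is to show $R(F) = R(E) \cap R(E')$.

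The first step is to combine the VMP with consistency of $\cons$ to express each score at $F$ using the same votewise data as at $E$ and $E'$. By Proposition~\ref{prop:VMP char} the VMP supplies a function $\delta$ such that for every election $G = (C, U, \mu)$ and every $r$, some minimizer of $d(G, \cons_r)$ has votewise distances $\delta(\mu_u, r)$ depending only on the ballot and $r$. Applied to $E$ and $E'$ separately this produces minimizers $\pi^{(r)}, \pi'^{(r)}$; their concatenation lies in $\cons_r$ by consistency of $\cons$, and has the same votewise distances $\delta(\pi_\cdot, r), \delta(\pi'_\cdot, r)$ as the minimizer for $F$ that VMP directly supplies. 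Hence the concatenation is itself a minimizer at $F$, and
\[
|r|_E = N_n(\delta(\pi_\cdot, r)), \quad |r|_{E'} = N_{n'}(\delta(\pi'_\cdot, r)), \quad |r|_F = N_{n+n'}\bigl(\delta(\pi_\cdot, r),\, \delta(\pi'_\cdot, r)\bigr).
\]

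The second step compares these three scores using homogeneity of $N$. Writing $\mathbf{a}_r = \delta(\pi_\cdot, r)$ and $\mathbf{b}_r = \delta(\pi'_\cdot, r)$, homogeneity and symmetry of $N$ give $N_{nn'}(\mathbf{a}_r^{(n')}) = |r|_E$ and $N_{nn'}(\mathbf{b}_r^{(n)}) = |r|_{E'}$, while (by symmetry, after reindexing) the $2nn'$-vector $(\mathbf{a}_r^{(n')}, \mathbf{b}_r^{(n)})$ has $N$-value equal to $|r|_F$. This exhibits $|r|_F$ as a common function of the two component norms $|r|_E$ and $|r|_{E'}$ and reduces the required inequality $|r^*|_F \leq |r|_F$ to a comparison of the two halves separately. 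Since $r^* \in R(E)\cap R(E')$ gives $|r^*|_E \leq |r|_E$ and $|r^*|_{E'} \leq |r|_{E'}$ for every $r$, this yields $|r^*|_F \leq |r|_F$ and so $R(E)\cap R(E') \subseteq R(F)$. The reverse inclusion follows symmetrically: if $r \in R(F)$ then $|r|_F \leq |r^*|_F$, combined with the two inequalities running the opposite way, forces equality on both halves, so $r \in R(E) \cap R(E')$.

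The main obstacle is monotonicity at the comparison step. Homogeneity in the sense of Definition~\ref{def:homog dist} controls only replication of a single vector and does not by itself make $N_{n+n'}(\mathbf{a}, \mathbf{b})$ nondecreasing in the norms of $\mathbf{a}$ and $\mathbf{b}$. For the $\ell^p$ families with $1 \leq p < \infty$ treated in \cite[Thm 7]{EFS2015}, the identity $(n+n')|r|_F^p = n |r|_E^p + n' |r|_{E'}^p$ makes monotonicity immediate and even strict, which also gives the reverse inclusion cleanly. For a general homogeneous norm I would either impose a monotonicity assumption in the style of Proposition~\ref{prop:homog VMP} or restrict to $\ell^p$; both are natural in the examples of interest, and with either in place the replication trick above converts the componentwise inequalities on $\mathbf{a}_r$ and $\mathbf{b}_r$ into the needed inequality on $N_{n+n'}(\mathbf{a}_r, \mathbf{b}_r)$, completing the argument.
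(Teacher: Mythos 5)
Your first step coincides with the entire content of the paper's own proof: using the VMP (in the uniform form of Proposition~\ref{prop:VMP char}) together with consistency of $\cons$, one chooses minimizers for $E$, $E'$ and $E+E'$ so that the minimizer for $E+E'$ is the concatenation of the other two, giving $|r|_E = N_n(\mathbf{a}_r)$, $|r|_{E'} = N_{n'}(\mathbf{b}_r)$ and $|r|_{E+E'} = N_{n+n'}(\mathbf{a}_r,\mathbf{b}_r)$ with $\mathbf{a}_r = \delta(\pi_\cdot,r)$ and $\mathbf{b}_r = \delta(\pi'_\cdot,r)$. The paper stops there and asserts that the conclusion ``follows just as in Proposition~\ref{prop:CMP}''; it does not spell out the comparison of argmins.

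Your second step, however, has a genuine problem. The replication identity you invoke is false: $(\mathbf{a}_r^{(n')},\mathbf{b}_r^{(n)})$ is not a replication of the concatenation $(\mathbf{a}_r,\mathbf{b}_r)$ unless $n=n'$, since a $k$-fold replication repeats both blocks the same number of times; so symmetry plus homogeneity do not give $N_{2nn'}(\mathbf{a}_r^{(n')},\mathbf{b}_r^{(n)}) = |r|_{E+E'}$ (for $\ell^p$ that vector yields $n'\sum_i a_i^p + n\sum_j b_j^p$ rather than $\sum_i a_i^p + \sum_j b_j^p$, up to normalization), and hence this device does not exhibit $|r|_{E+E'}$ as a function of $|r|_E$ and $|r|_{E'}$. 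You then, quite candidly, leave the comparison unfinished for a general homogeneous norm and complete it only for $\ell^p$ (where the additive identity gives both inclusions, the reverse one via strictness) or under an added monotonicity hypothesis. Note that monotonicity in the positive orthant, as in Proposition~\ref{prop:homog VMP}, compares vectors componentwise and still does not convert $N_n(\mathbf{a}_{r^*})\le N_n(\mathbf{a}_r)$ and $N_{n'}(\mathbf{b}_{r^*})\le N_{n'}(\mathbf{b}_r)$ into $N_{n+n'}(\mathbf{a}_{r^*},\mathbf{b}_{r^*})\le N_{n+n'}(\mathbf{a}_r,\mathbf{b}_r)$; what the argument actually needs is that the norm of a concatenation is a (strictly) monotone function of the norms of its two blocks, which holds for the $\ell^p$ families of \cite{EFS2015} but is not a consequence of replication-homogeneity. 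So as a proof of the proposition in the stated generality your argument has a gap at exactly this comparison step --- a step the paper's own terse proof also passes over silently after the minimizer decomposition, which is the part you carried out correctly.
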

\begin{proof}
Let $E, E'\in \elecs$ such that $R(E) \cap R(E') \neq \emptyset$. We show that for all $r$ there are minimizers $m(E, r), m(E', r)$ and $m(E+E', r)$ such that $m(E,r) + m(E',r) = m(E+E', r)$. The result then follows just as in Proposition~\ref{prop:CMP}.

The claim follows easily from the VMP. Because minimization of the distance to $r$ occurs votewise, it 
respects the split into $E$ and $E'$.
\end{proof}

\begin{cor}
\label{cor:consist}
If $1\leq p \leq \infty$, then
$\R(\sunam^s, d^p)$ is consistent. 
\end{cor}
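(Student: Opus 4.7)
The plan is to deduce this directly from Proposition~\ref{prop:consist VMP} applied to $\cons=\sunam^s$ together with a votewise presentation of $d^p$. The three hypotheses to verify are consistency of $\sunam^s$, homogeneity of the underlying norm family, and the VMP for $(\sunam^s, d^p)$.

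Two of these are quick. The consensus set $\sunam^s_r$ consists of elections every voter of which has top-$s$ ranking equal to $r$; this property is manifestly preserved under the $+$ operation, so each $\sunam^s_r$ is closed under addition and $\sunam^s$ is consistent. For the VMP I would invoke Example~\ref{eg:VMP yes} directly: taking $S_r$ to be the set of rankings that disagree with $r$ in some top-$s$ position, a minimizer in $\sunam^s_r$ is produced voter-by-voter by sending each $\pi_i$ to its closest neighbour in $L(C)\setminus S_r$, and the resulting voter contributions depend only on $(\pi_i, r)$, as the VMP requires.

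The main obstacle is the homogeneous-norm hypothesis. For $p=\infty$ the $\ell^\infty$ family is already homogeneous in the sense of Definition~\ref{def:homog dist}, since $\|x^{(k)}\|_\infty=\|x\|_\infty$. For $1\leq p<\infty$, however, $\|x^{(k)}\|_p=k^{1/p}\|x\|_p\neq\|x\|_p$, so $\ell^p$ itself is not homogeneous in that sense. I would therefore apply the proposition not to $d^p$ directly but to the rescaled family $\tilde N_n(x):=n^{-1/p}\|x\|_p$, which is easily seen to be a homogeneous family of symmetric seminorms. Because $d^p$ is standard, all finite distances from a fixed election $E$ with $n$ voters are scaled by the common constant $n^{-1/p}$ when passing from $d^p$ to $d^{\tilde N}$; in particular the argmin set defining the rule in Definition~\ref{def:rules} is unchanged at every election, so $\R(\sunam^s, d^{\tilde N})=\R(\sunam^s, d^p)$. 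The VMP argument above applies verbatim to $d^{\tilde N}$, and Proposition~\ref{prop:consist VMP} then yields consistency of $\R(\sunam^s, d^p)$.
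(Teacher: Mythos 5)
Correct, and essentially the paper's own (implicit) derivation: the corollary is stated without proof as an instance of Proposition~\ref{prop:consist VMP}, with consistency of $\sunam^s$ checked exactly as you do, the VMP supplied by Example~\ref{eg:VMP yes} via Proposition~\ref{prop:VMP}, and the homogeneity hypothesis handled by rescaling the norm family, noting the rule is unchanged. Your choice of the factor $n^{-1/p}$ is in fact slightly more careful than the paper's normalization remark (which uses $n^{-1}$, giving a homogeneous family only for $p=1$), so your argument closes a small gap rather than opening one.
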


Recall that Kemeny's rule is consistent when properly considered as a social welfare rule, but not when considered as a social choice rule  (this point may potentially confuse readers of \cite{EFS2015}).

\subsection{Continuity}
\label{ss:VMP contin}

After fixing an arbitrary ordering on $L(C)$, each partial social rule on $\simp_\rats(L(C))$ of size $1$ can be identified with 
an arbitrary function on an arbitrary nonzero subset of the rational points of the $6$-simplex 
$\simp_6$, with image contained in $C$. Rules defined in this level of generality are not easy to deal with. Young \cite{Youn1975} introduced the axiom of continuity.
\begin{defn}
\label{def:continuous}
An anonymous rule $R$ is \hl{continuous} if when $E = (C, V, \pi), E' = (C, V', \pi')$ and $R(E) = \{r\}$ 
then $R(kE + E') = \{r\}$ for all sufficiently large integers $k$.
\end{defn}

If $R$ is homogeneous then it is continuous if and only if every vote distribution sufficiently close
to $E$ in the $\ell^1$-norm on $\simp(L(C))$ yields the same output as $E$. We do not know of any voting rule seriously considered in the literature that is not continuous. 

We now give a slight generalization of a result of Elkind, Faliszewski and Slinko \cite[Thm 6]{EFS2015}. 
\begin{prop}
\label{prop:contin VMP}
Suppose that $\cons$ is continuous and homogeneous, $d$ is votewise with respect to a continuous homogeneous seminorm and $(\cons,d)$  satisfies the VMP. Then $\R(\cons, d)$ is continuous.
\end{prop}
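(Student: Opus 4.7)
The plan is to fix an election $E = (C, V, \pi)$ with $R(E) = \{r_0\}$ and an election $E' = (C, V', \pi')$ disjoint from it, and to show that for every $r \in L_s(C)$, $d(kE + E', \cons_r) \to d(E, \cons_r)$ as $k \to \infty$. Because $L_s(C)$ is finite and $d(E, \cons_{r_0}) < d(E, \cons_r)$ holds strictly for every $r \neq r_0$, these convergences combine to produce a uniform threshold $K$ beyond which $d(kE + E', \cons_{r_0}) < d(kE + E', \cons_r)$ for all $r \neq r_0$, which is exactly $R(kE + E') = \{r_0\}$, as required for continuity.

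For each $r \in L_s(C)$, the VMP combined with Proposition~\ref{prop:VMP char} supplies a function $\delta(\cdot, r) \colon L(C) \to \reals_+$ such that for every election $F = (C, W, \sigma)$,
$$
d(F, \cons_r) \,=\, N_{|W|}\bigl(\delta(\sigma(w), r) : w \in W\bigr).
$$
Writing $v_r = (\delta(\pi_i, r))_{i=1}^n \in \reals^n$ and $v'_r = (\delta(\pi'_j, r))_{j=1}^{n'} \in \reals^{n'}$ with $n = |V|$, $n' = |V'|$, this gives
$$
d(E, \cons_r) \,=\, N_n(v_r), \qquad d(kE + E', \cons_r) \,=\, N_{kn+n'}\bigl(v_r^{(k)}, v'_r\bigr).
$$
Homogeneity of $N$ yields $N_{kn}(v_r^{(k)}) = N_n(v_r)$ for every $k$, so $F \mapsto d(F, \cons_r)$ depends only on the vote distribution $\distmap(F) \in \simp_\rats(L(C))$, and in fact equals the evaluation of the underlying seminorm on the $\delta(\cdot, r)$-vector weighted by the components of $\distmap(F)$.

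With the score realised as a functional on the rational simplex of vote distributions, continuity of the seminorm family translates into continuity of this score functional with respect to the $\ell^1$ metric on $\simp_\rats(L(C))$. Since
$$
\distmap(kE + E') \,=\, \frac{kn}{kn+n'}\distmap(E) + \frac{n'}{kn+n'}\distmap(E') \,\longrightarrow\, \distmap(E) \qquad (k \to \infty)
$$
in $\ell^1$, the convergence $d(kE + E', \cons_r) \to d(E, \cons_r)$ follows for each $r$, and the argument of the first paragraph completes the proof.

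The main obstacle is making precise the continuity statement for the seminorm family and verifying that the homogeneity and continuity of $\cons$ ensure that the consensus sets $\cons_r$ behave well under these limits, so that the score functional really is defined and continuous in a neighbourhood of $\distmap(E)$ in $\simp_\rats(L(C))$ rather than being disrupted by consensus sets becoming empty or jumping discontinuously. Essentially all of the proof's content is in carrying out this continuity verification; the combinatorial threshold argument in the first paragraph is then purely formal.
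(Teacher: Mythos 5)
Your overall skeleton matches the paper's: use the VMP to write each score $d(\cdot,\cons_r)$ as a seminorm value of per-vote distances, use homogeneity of the seminorm to pass to vote distributions, use continuity of the seminorm to take $k\to\infty$, and then exploit the strict gap at $E$ over the finitely many rankings. However, there is a genuine gap, and it sits exactly where you yourself locate ``essentially all of the proof's content'' and then stop. Your argument hinges on the identity $d(F,\cons_r)=N_{|W|}\bigl(\delta(\sigma(w),r):w\in W\bigr)$ holding for \emph{every} election $F$ with a single election-independent function $\delta$, so that the score becomes a well-defined, continuous functional of $\distmap(F)$ on $\simp_\rats(L(C))$. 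The VMP does not deliver this: it supplies, for each individual election, \emph{some} minimizer whose per-vote distances are constant on vote types, but that minimizer (and hence those values) may vary from election to election -- the paper itself insists on this in the proof of Proposition~\ref{prop:homog VMP} (``the minimizer may depend on $k$''). A telltale symptom that your reading overshoots is that your argument never uses the continuity and homogeneity of $\cons$, even though they are hypotheses of the proposition; they appear only in your closing paragraph, where the real difficulty is deferred rather than resolved.

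The paper closes precisely this step by a construction you are missing. For the winning ranking $r$ only an upper bound is needed: take a minimizer $F\in\cons_r$ for $E$ (per-vote distance multiset $S$) and any $F''\in\cons_r$ on voter set $V'$ (multiset $S''$); continuity and homogeneity of $\cons$ guarantee that $kF+F''\in\cons_r$ for all large $k$, whence $d(kE+E',\cons_r)\le N(kS+S'')$, which by homogeneity and continuity of the seminorm tends to $N(S)=d(E,\cons_r)$. For each competing $r'$ the VMP is applied to the election $kE+E'$ itself, writing its score as $N$ of a multiset of the form $kS'+T$, and the strict inequality $d(E,\cons_r)<d(E,\cons_{r'})$ is then preserved in the limit. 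So to repair your proposal you must either actually prove the deferred claim that the score functional is well defined and continuous near $\distmap(E)$ -- which requires the hypotheses on $\cons$ and a cross-election comparison of minimizers, and is not a formality -- or replace it by the paper's explicit near-minimizer construction; as written, the decisive convergence $d(kE+E',\cons_{r_0})\to d(E,\cons_{r_0})$ is asserted, not established.
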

\begin{proof}
Let $E=(C, V, \pi), E'=(C, V', \pi') \in \elecs$ with $R(E) = \{r\}$. Thus there is $F=(C, V, \tau) \in \cons_r$ such that for all $r'\neq r$ and all $F'\in \cons_{r'}$,
$d(E, F) < d(E, F')$. Note that $d(E, F) = N(S)$ where $S$ is the multiset of all $d(\pi_i, \tau_i)$, and similarly $d(E, F') = N(S')$.

Fix $F''=(C, V', \pi'')\in \cons_r$. Then $d(kE+E', kF+F'') = N(kS+S'')$. Also for $d(kE+E', F')=N(kS'+T)$ for some $T$. By homogeneity of the norm and its continuity, we have for sufficiently large $k$, with $\varepsilon:=|V'|/k$,
$d(kE+E', kF+F'') = N(S+\varepsilon S'') < N(S'+\varepsilon T ).$
\end{proof}

\begin{cor}
\label{cor:}
If $1\leq p \leq \infty$, then $\R(\sunam^s, d^p)$ is continuous.
\end{cor}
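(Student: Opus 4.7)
The plan is to apply Proposition~\ref{prop:contin VMP} directly with $\cons = \sunam^s$ and distance $d^p$, checking each hypothesis in turn. The four things to verify are that $\sunam^s$ is continuous and homogeneous, that $d^p$ is votewise with respect to a continuous homogeneous seminorm, and that $(\sunam^s, d^p)$ satisfies the VMP.

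For the consensus, homogeneity of $\sunam^s$ is immediate: if every voter in $E$ agrees with an $s$-ranking $r$ in the top $s$ positions, then so does every voter in $kE$, so $\sunam^s_r$ is closed under cloning. Inspection of the proof of Proposition~\ref{prop:contin VMP} shows that continuity of $\cons$ is invoked only to produce, for each $r$ in the image of $\cons$ and each voter set $V'$, some $F'' \in \cons_r$ with voter set $V'$, together with consistency of $\cons_r$ ensuring $kF + F'' \in \cons_r$. Both are routine for $\sunam^s$: take $F''$ to be any election on $V'$ in which every voter casts a ranking agreeing with $r$ in the top $s$ positions, and note that the union of two such strongly unanimous elections remains strongly unanimous on the top $s$.

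For the distance, $d^p$ is votewise by construction, and the $\ell^p$ seminorm on each $\reals^n$ is a standard continuous function. Homogeneity in the sense of Definition~\ref{def:homog dist} holds literally only for $p = \infty$; for $p < \infty$ one has $\|x^{(k)}\|_p = k^{1/p}\|x\|_p$. I would invoke the remark following Definition~\ref{def:homog dist} to pass to the rescaled distance $d_*^p(E, E') = n^{-1/p}\, d^p(E, E')$, with $n$ the common voter set size and $n^{-1/p}$ read as $1$ when $p = \infty$. Because $d^p(E, E')$ is infinite whenever the voter sets differ, and because for any fixed $E$ the competing minimization candidates all share $E$'s voter set, the $n$-dependent factor is a global constant for the minimization problem that defines $\R(\cdot,\cdot)(E)$; hence $\R(\sunam^s, d_*^p) = \R(\sunam^s, d^p)$, and the rescaled seminorm family is continuous and homogeneous. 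Finally, the VMP for $(\sunam^s, d^p)$ follows from Proposition~\ref{prop:VMP} as in Example~\ref{eg:VMP yes}: take $S_r$ to be the set of rankings in $L(C)$ that do not agree with $r$ in their top $s$ positions.

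With all hypotheses in hand, Proposition~\ref{prop:contin VMP} directly yields continuity of $\R(\sunam^s, d_*^p) = \R(\sunam^s, d^p)$. The only step requiring any care is the $n$-dependent rescaling used to achieve strict homogeneity of the seminorm, and the main obstacle is simply convincing oneself that this harmless normalization leaves the induced DR rule and the continuity of the underlying seminorm untouched, so that Proposition~\ref{prop:contin VMP} can be applied as a black box.
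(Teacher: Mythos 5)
Your proof is correct and follows essentially the route the paper intends: the corollary is meant to drop out of Proposition~\ref{prop:contin VMP} by checking that $\sunam^s$ is continuous and homogeneous, that the VMP holds via Proposition~\ref{prop:VMP}/Example~\ref{eg:VMP yes}, and that the $\ell^p$ seminorm can be normalized to be homogeneous as in the remark after Definition~\ref{def:homog dist}. Your only deviation is a welcome extra care: the normalization factor must be $n^{-1/p}$ rather than the $n^{-1}$ stated in that remark (they agree only for $p=1$), and your observation that standardness of $d^p$ makes this a harmless constant rescaling for each fixed election is exactly the right justification.
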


\section{Conclusions and future work}
\label{s:conc}

We have clarified the relationship between distance rationalizability and axiomatic properties of social rules, and given improved necessary and sufficient conditions for rules to satisfy several of these axioms. The results show clearly that votewise distances combine better with votewise consensuses (which we define as those satisfying the VMP). The more complicated structure of consensuses such as $\cond^s$ compared to $\sunam^s$ is reflected in the failure of various properties to extend. Of course, VMP is a very strong property, and we
do not know of consensuses other than $\sunam^s$ that satisfy it 
generally. However VMP and CMP may be satisfied a particular $(\cons, d)$ pair in a given application.
What seems clear is that votewise distances work best with ``votewise consensuses", and Condorcet consensus with tournament distances. Mixing the two yields rules such as Dodgson's and the Voter Replacement Rule which fail to satisfy basic properties such as homogeneity.

We have only a few sufficient conditions for homogeneity of a DR rule. If the rule is not homogeneous, a homogeneous rule similar to the original may be found.
In \cite{Fish1977} a way around the nonhomogeneity of Dodgson's and Young's rules was
found, by using a limiting process to redefine the distance. This is unsatisfactory --- it is not even clear that the limit exists. Presumably using the construction  $\R(\overline{\cons}, \overline{d})$ may work, but it is not completely clear to us.

Systematic exploration of the space of rules $\R(\sunam^s, d^p)$ where $d$ is a neutral distance on rankings, may well unearth new rules with desirable properties. These rules are already known to be continuous, neutral, anonymous, homogeneous and consistent. Other possibly desirable properties may also be satisfied: Kemeny's rule, which falls into this class, also 
satisfies a Condorcet property for social welfare rules \cite{YoLe1978} while scoring rules are also monotonic as social choice rules.

To our knowledge, $\ell^p$ votewise distances with $1 < p < \infty$ have not been 
studied systematically. Also, in addition to the discrete, inversion and Spearman metrics on rankings  discussed here, there are many interesting distances on rankings yet to be explored. Besides votewise distances, there are many other interesting distances on $\votesits$ and $\votedists$, which may yield useful new social rules. Such \hl{distances on multisets} and \hl{statistical distances} are heavily used in many application areas \cite{DeDe2009}.

In a related forthcoming work, the present authors use the framework of distance rationalizability of anonymous and homogeneous rules to study the decisiveness of such rules. We expect other applications, for example by using different groups of symmetries.

\printbibliography

\end{document}


\comment{what can we say in general about DR and quotient distances before moving to simple ones?}

\comment{MW: I don't see why Fishburn 1977 construction works  - why does the limit exist? We should define it better, using the equivalence relation somehow.  What is the relation with  $\R(\overline{\cons}, \overline{d})$ if $\R(\cons, d)$ is not homogeneous?}
\comment{MW:how to show Condorcet rules are continuous?}
\comment{MW: is there a standard way to make a graphic distance into a metric, and if so does it generalize EFS2012 procedure?}

\begin{remark}
\comment{BH - please fix up this example, or just remove  it if too difficult - it is about VMP}
It is not sufficient to assume that the consensus sets have some missing rankings. Take three distinct rankings $r_1, r_2, r_3$, and a consensus $\cons_{r_1}$
being something like \verb+"at least 20% of r1 and r2 and 10% of R1 and 10% of r2". Take your election to be only r3s. Then your minimizer seems to be r1=10% and r2=10%, so you will have to swith R"s into r1s and r2s, and thus, if you take d such that d(r1,r3)<>d(r2,r3), it won't work. Here, K seems quite nice, even convex.+

\end{remark}


\begin{prop}
\label{prop:short path}
\comment{MW quote result or give proof, use median term}
A quasimetric on $\elecs$ is a shortest path distance for some nonempty
edge relation on $\elecs$ if and only if it takes integer values, and
for each $y, x\in \elecs$ such that $2\leq d(y,x) < \infty$, there is
$z\in \elecs, z\neq x, z\neq y$ such that $d(y,z) + d(z,x) = d(y,x)$.

\end{prop}
\begin{proof}
Each shortest path quasimetric satisfies the given conditions. For the
converse, suppose that $d$ is a quasimetric on $\elecs$ satisfying the
given conditions. It follows that for every $y,x$, there are $x_0=x,
\dots, x_k=y$ such that $d(y,x) = k$ and each $d(x_i, x_{i+1}) = 1$.
Define an arc between $E$ and $E'$ if and only if $d(E,E') = 1$. Let
$d'$ be the shortest path distance on the associated digraph. It follows
by induction on the minimal value of $k$ that $d = d'$.
\end{proof}

Another question concerns changing the set of candidates. For some
consensuses that are coherent in a way described below, there is a
natural way to redefine a $1$-consensus $\cons$ as an $s$-consensus, for
$s > 1$. We simply construct the ranking from the top. The first element
$a$ is the value of the $1$-consensus, and the next is the value when
the $1$-consensus is applied to the candidate set $C\setminus\{a\}$. We
continue inductively. The coherence condition is that under deletion of
candidate $a$, $\cons^C$ maps into $\cons^{C\setminus\{a\}}$, for all
$a\in C$ (where $\cons^C$ denotes the value of $\cons$ with candidate
set $C$). For example, this condition is satisfied by $\cond$.  In
general, when this latter condition is not satisfied, we must reduce
$\cons$ at each iteration. For example, we obtain $\sunam$ from $\wunam$
by such a procedure.

\begin{prop}
\label{pr:div}
Let $\cons$ be a homogeneous $s$-consensus. Then $\cons$ is divisible if and only if  $D(\overline{\cons})$ is a union of corners of $\simp_\rats$.
\end{prop}

\begin{proof}
Suppose that $\cons$ is divisible and let $E\in \cons_r$. Write $E=(C, V, \pi)$ as the sum $E_1 + \dots + E_k$ where $k = |V|$ and 
each $E_i= (C, \{v_i\}, \pi_i)$ is an election with  a single voter. By divisibility, $E = kE'$ where $E' \in \cons_r$. 
Thus $E'$ must have only a single voter, and so $\pi$ must be a unanimous profile. Conversely, if the domain of $\cons$ is a union of 
corners, clearly every profile is unanimous and hence each $E\in \cons_r$ has the form $kE'$  for $E'\in \cons_r$, 
whenever $k$ is a divisor of $|V(E)|$.
\end{proof}

\begin{defn}(``single-peaked consensus")
\label{def:sp}

Consider the set of \hl{single-peaked} elections (those for which
there is a fixed ordering of $C$ with respect to which the following is
true: for each voter $v$, there is an ``ideal" element $c_i$ such that
if $k\leq j \leq i$ or $i \leq j \leq k$, $v$ prefers $c_i$ to $c_j$ to
$c_k$).

If $n:=|V|$ is odd, the median of the ideal elements $c_i$ is the consensus winner. 
In this case, for each $r\in L_1(C)$, every single-peaked election with median element $r$ also belongs to $\cond_r$. When $n$ is even, it is not clear how to define $\SP$. 
\end{defn}

\begin{defn}
\label{def:lorenz}
Consider the set of elections for which $a$ is the winner  if and only if 
$a$ has at least as many the first-place votes  as each other candidate, at least as many
first- and second-place votes, etc (eventually some such inequality must be strict).
 We denote this consensus $\scor$. It has been called the \hl{Lorenz consensus}.
\end{defn}

\begin{remark}
$\wunam$ is a refinement of $\scor$   but otherwise there is no  refinement relation
between $\scor$ and any of the other consensuses above. 

\end{remark}

\comment{proof is wrong, but can we put a condition on consensus to rescue the result?}

\begin{prop}
\label{pr:hillas positive}

If $R=\R(\cons,d)$ and $d$ is a shortest path distance, then $R = \R(\cons^{\epsilon},d)$ for all $\epsilon$. In particular, $R = \R(\cons^{\max}(R),d)$.
\end{prop}
\begin{proof}
Consider an election $E$. If $E\in\cons^{\epsilon}$ then it has the same winner according to $R$ and $\R(\cons^{\epsilon},d)$ because $\cons$ is a refinement of $\cons^{\epsilon}$. Now suppose that 
$E\not\in\cons^{\epsilon}$ and consider a ranking $r$. The distance from $E$ to $\cons_r$ is finite if and only if there exists a path in the underlying graph of $d$ from $E$ to $\cons_r$. Then, by definition of $\cons^{\epsilon}$, the last $\lfloor\epsilon\rfloor$ points of this path are in $\cons^{\epsilon}$, and all the other points are not. So the distance from $E$ to $\cons^{\epsilon}_r$ is at most $d(E,\cons_r)-\lfloor\epsilon\rfloor$. Conversely, $d(E,\cons_r)-\lfloor\epsilon\rfloor\geq d(E,\cons^{\epsilon}_r)$, because, one can extend a path from $E$ to $\cons^{\epsilon}_r$ in a path from $E$ to $\cons_r$.

Now, for any $E\in\cons^\text{max}(R)$, there exists $\epsilon$ such that $E\in\cons^{\epsilon}$. Then, since $\cons^{\epsilon}$ is a refinement of $\cons^{\max}(R)$, we have that $E$ has the same winner according to $R$ and to $\R(\cons^{\max}(R),d)$.
\end{proof}

\begin{eg} (quasimetrics)
Quasimetrics occur in situations when there is asymmetry in the cost of changing a vote. 
For example, it may be much more costly (for social reasons) to change a profile away from
unanimity than towards it. Some rules, for example Young's rule, are defined in terms of deletion
of voters and for this nonstandard distances are needed. For example,
let $d'_{del}(E,E')$ (respectively $d'_{ins}(E,E')$) be defined as the
minimum number of voters we must delete from (insert into)  election $E$
in order to reach election $E'$ (or $+\infty$ if $E'$ can never be
reached). Each is nonstandard and a quasimetric. Their symmetrized versions, which are metrics \cite{EFS2012}, are still nonstandard.

\end{eg}

If $1\leq s'\leq s$, the \hl{$s'$-restriction} of an $s$-ranking $r$ is the $s'$-ranking of the top $s'$ elements of $r$ in the same order.

\comment{Is there a way to define restriction of partial social function nicely? Is it useful?}

We have already seen how one partial social rule of size $s$ can extend another. Sometimes, there are different values of $s$ involved, and we can define a unified rule for all $s$.

\begin{defn}
\label{def:s-restrict}
Let $R$ be a partial social rule of size $1$ with domain $D$. Fix a linear order on $C^*$. For each $s\geq 1$ we define a partial social rule $R_s$ on domain $D_s \subseteq D$. Inductively, if $R_s$ has already been defined on $D_s$, we proceed as follows. Given  $E=(C,V,\pi)\in D_s$, we can delete the top element of $C$ from all data, to obtain $E' = (C',V,\pi')$. Let $D_{s+1}$ be the set of all $E$ for which $E'\in D$. For such $E$, let $R_{s+1}(E)$ be the set of all elements of the form $r_s\cdot c$ where $r_s\in R_s(E)$ and $c\in R(E')$. 

Informally, we are voting by rounds: we first find the winner, then the winner from the remaining candidates, etc. The domain naturally

We call $R^{s'}$ the \hl{$s'$-restriction} of $R$.
\comment{MW: still not working! use coherence of consensus?}
\end{defn}

\begin{remark}
We will see some examples in the next section. Note that the domain of the restriction is smaller than the domain of the original, and the output is more detailed. 
\end{remark}